\newcommand{\be}{\begin{equation}}
\newcommand{\ee}{\end{equation}}
\newcommand{\bea}{\begin{eqnarray}}
\newcommand{\eea}{\end{eqnarray}}
\newtheorem{theorem}{Theorem}
\newtheorem{corollary}{Corollary}[theorem]
\def\tr{{\mathrm{tr}}}
\def\CC{\mathcal{C}}
\def\CD{\mathcal{D}}
\def\CF{\mathcal{F}}
\def\CG{\mathcal{G}}
\def\CI{\mathcal{I}}
\def\CN{\mathcal{N}}
\def\CO{\mathcal{O}}
\def\CP{\mathcal{P}}
\def\CW{\mathcal{W}}
\title{Thermal field theory correlators in the large-$N$ limit and the spectral duality relation}
\author[a,b]{Sa\v{s}o Grozdanov}
\author[a]{and Mile Vrbica}
\affiliation[a]{Higgs Centre for Theoretical Physics, University of Edinburgh, Edinburgh, EH8 9YL, Scotland}
\affiliation[b]{Faculty of Mathematics and Physics, University of Ljubljana, Jadranska ulica 19, SI-1000 Ljubljana, Slovenia}
\abstract{In Ref.~\cite{Grozdanov:2024wgo}, we derived a spectral duality relation applicable to the spectra of 3$d$ conformal field theories (CFTs) and their holographically dual 4$d$ black holes. In this work, we further elaborate on the properties of this duality relation and argue that the same relation can be applied to certain pairs of thermal correlator spectra in large-$N$ quantum field theories in any number of spacetime dimensions, provided the correlators are meromorphic functions with only simple poles and satisfy the thermal product formula. We discuss a rich set of properties that such retarded two-point functions must exhibit. We then show that the spectral duality relation and its implications apply to pairs of correlators in double-trace deformed CFTs and, more generally, to correlators in theories related by the Legendre transform. We illustrate, through several examples, how the spectrum of one correlator can be reconstructed from that of its dual correlation function. Notably, this includes cases relating the thermal spectra of scalar primary operators at ultraviolet and infrared fixed points, as well as current operators in a CFT$_3$ and its particle-vortex dual.
}
\begin{document}
\maketitle
\flushbottom
\section{Introduction and statement of results}
The large-$N$ limit of a quantum field theory (QFT), in which the dimension of the local Hilbert space of the theory is large, is a semi-classical limit that can be used to dramatically simplify the analysis of numerous questions in QFTs (see e.g.~Refs.~\cite{Yaffe:1981vf,Moshe:2003xn}). Beyond its practical appeal, the original expansion of various correlators in powers of `$1/N$' \cite{tHooft:1973alw} also bears a conceptually valuable connection to the topological expansion of perturbative string theory. While a number of different points of view exist that express the simplification in the $N\to\infty$ limit, none expresses its classical nature more clearly than the gauge/gravity duality (or holographic duality) (see e.g.~Ref.~\cite{Maldacena:1997re}), which can, for example, translate the problem of calculating a thermal QFT observable, which would perturbatively amount to resumming an infinite number of Feynman diagrams, to computing some geometric property of a bulk surface, solving a differential equation (see e.g.~Refs.~\cite{Zaanen:2015oix,Ammon:2015wua}) or even further reducing the calculations of correlators to only a manipulation of algebraic equations near the horizon of a bulk black hole \cite{Shenker:2013pqa,Grozdanov:2023tag}. As a result of this simplicity, holography has been invaluable in the study of generic properties of large-$N$ QFTs (mostly conformal field theories or CFTs). 

The focus of this work will be to uncover various constraints and the analytic properties of two-point correlation functions in thermal (matrix) large-$N$ strongly coupled QFTs (mostly CFTs) with holographic duals, and also to study the effects of the renormalisation group (RG) flows on such correlators. Our discussion will centre on the QFT perspective and, once the assumptions on the relevant correlators are stated, can be entirely understood within the language of QFTs without explicit reference to the gravitational dual. However, as is almost always the case in such discussions, it would have been difficult to arrive at the particular closed set of assumptions that pertain to such QFTs and it would have been extremely difficult to find interesting, computable examples of our results without the help of holography. With the help of such examples, the central theme of this work will be the field theoretic applications of the {\em spectral duality relation} that we derived in Ref.~\cite{Grozdanov:2024wgo} and discussed from the viewpoint of gravity in Ref.~\cite{Grozdanov:2025ner}. 

One specific reason to focus on retarded correlators is because they open a window into the description of thermal and time-dependent, non-equilibrium processes in QFT within the language of linear response theory. Even though our focus will be on retarded correlators, analytic continuation would allow us to make concrete statements about other types of real-time and Euclidean correlation functions. More precisely, in the thermal state with temperature $T = 1/\beta$, the causal response of an expectation value (a one-point function) of an operator $\CO(t)$ to a `small' perturbation by an external source $J(t)$ is encoded in the retarded two-point correlator $G(t)$ via
\begin{equation}
    \expval{\CO(t)}_J = \int dt' \, G(t-t') J(t') . \label{eq:green_response}
\end{equation}
In large-$N$ chaotic theories at non-zero temperature, the retarded two-point functions are believed to exhibit a universal `ringdown' form\footnote{Up to contact terms.}
\begin{equation}
    G(t)=-2i \Theta(t)\sum_n r_n e^{-i \omega_n t}, \label{eq:G_realtime}
\end{equation}
characterised by a set of complex frequencies $\qty{\omega_n}$ composed of all the quasinormal modes (QNMs) $\omega_n$ referred to as the \emph{spectrum}. In stable theories, the QNMs have $\Im \omega_n <0$ for real magnitude of the wavevector $k \equiv \sqrt{{\bf k} \cdot {\bf k}}$. In terms of the Fourier space representation with $\omega \in \mathbb{C}$, Eq.~\eqref{eq:G_realtime} is a statement that the retarded correlator $G(\omega)$ contains a possibly infinite set of poles at $\omega_n$ with respective residues $2ir_n$. It therefore decays exponentially at large times, which is directly related to the fact that its frequency-representation is assumed to contain no branch cuts. In other words, $G(\omega)$ is a meromorphic function in the complex $\omega$ plane. As discussed above, this essential property is motivated by studies of large classes of holographic models \cite{Waeber:2015oka,Grozdanov:2016vgg,Casalderrey-Solana:2018rle,Grozdanov:2018gfx,Dodelson:2023vrw,Kovtun_2005} in which the spectrum corresponds to the dual quasinormal modes of a classical asymptotically anti-de Sitter black hole and is therefore `easily' accessed by explicit computations. Importantly, the meromorphic property has also been demonstrated in a number of non-maximally chaotic large-$N$ models \cite{Choi:2020tdj,Dodelson:2024atp}. Interestingly, a counterexample to the meromorphic structure discussed here that contains branch cuts at {\em all} values of the coupling has been reported in the 3$d$ (non-holographic) large-$N$ vector model spectrum computed in Ref.~\cite{Romatschke:2019ybu}. Beyond the large-$N$ limit, one expects the emergence of branch cuts to give rise to the so-called long time tails and power-law (instead of exponential) decays of certain correlation functions in real time (see Refs.~\cite{Kovtun:2003vj,Caron-Huot:2009kyg,Grozdanov:2024fle}). Such corrections to the meromorphic large-$N$ structure, which is, as discussed, motivated by standard holographic calculations with classical bulk duals, will not be discussed in this work.

One of the goals of this work is to discuss in detail the complex-analytic properties of such correlators. Several of the results we present have been either numerically observed or suspected to be true. Here, however, we make them precise and prove them rigorously using the methods of complex analysis. We will mainly rely on two ingredients. Firstly, we use the fact that the sum in Eq.~\eqref{eq:G_realtime} must converge. This fact is intimately related to the power-law scaling of $G(\omega)$ at large $\omega$. We show that this means that the poles and zeroes of $G(\omega)$ must be evenly distributed, in the radial sense, and appear `equally frequently' in the complex $\omega$ plane. Moreover, they must lie sufficiently close to one another. Secondly, we use the thermal product formula of Ref.~\cite{Dodelson:2023vrw}, conjectured to hold generically in chaotic large-$N$ thermal theories. This will allow us to directly relate the spectrum $\qty{\omega_n}$ with the zeroes and residues of $G(\omega)$ in any number of spacetime dimensions. This result follows from a simple modification of the spectral duality relation originally derived in the context of $3d$ CFTs and used to derive constraints on the pairs of longitudinal and transverse spectra of conserved currents \cite{Grozdanov:2024wgo,Grozdanov:2025ner}.

We further apply the spectral duality relation to the study of double-trace deformed RG flows in large-$N$ theories \cite{Gubser:2002vv,Witten:2001ua,Klebanov:1999tb,Klebanov:2002ja,Faulkner:2010jy,Heemskerk:2010hk,Grozdanov:2011aa}. In particular, we show how the relation can be applied to the correlation functions of single-trace operators, with the main focus placed on such RG flows in CFTs, starting in some well-defined ultraviolet (UV) CFT. To briefly state the main results, consider a single-trace operator $\CO$ with a scaling dimension $\Delta$. We can then deform the action of a given CFT with a double-trace deformation, expressed as
\begin{equation}
    S_f = S_0- \int \frac{f}{2} \CO^2.
\end{equation}
Let $G_0(\omega)$ and $G(\omega;f)$ be the \emph{retarded} correlators of $\CO$ in the undeformed and deformed theories $S_0$ and $S_f$, respectively. The spectrum of $G(\omega;f)$, denoted by $\{\omega_n^f\}$, is the set of all points in the complex plane where $G_0(\omega)$ takes value determined by the double-trace coupling $f$,
\begin{equation}
\label{def:f_qnms}
    G_0(\omega_n^f)=-\frac{1}{f} \quad\Leftrightarrow \quad G(\omega_n^f;f) = \infty.
\end{equation}
We can now choose any pair of double-trace couplings, say $f_1$ and $f_2$, and construct a function $S(\omega)$, which contains the complete information about the spectra at both couplings:
\begin{equation}
    S(\omega)=\prod_n \qty(1-\frac{\omega}{\omega^{f_1}_n })\qty(1+\frac{\omega}{\omega^{f_2}_n}). \label{def:S_intro}
\end{equation}
This function then obeys the spectral duality relation \cite{Grozdanov:2024wgo,Grozdanov:2025ner}
\begin{equation}
    S(\omega)-S(-\omega)=2i\lambda(k; f_1, f_2) \sinh\frac{\beta\omega}{2}.
\end{equation}
The `universality' of the relation lies in the important fact that $\lambda(k; f_1, f_2)$ is \emph{not} a function of $\omega$. Moreover, it can be expressed either in terms of the spectra or the low-$\omega$ expansion of either of the two correlators. The spectral duality relation therefore presents an infinite set of stringent constraints relating the spectra of two theories related by a double-trace deformation, and, as we show below, also between the set of poles and zeroes of a meromorphic correlator. We then also use the partial fractions decomposition to show that the spectrum $\{\omega_n^{f_2}\}$ can always be derived (in practice, at least numerically) from the spectrum $\{\omega_n^{f_1}\}$, up to at most a single parameter. Concretely, we show that the spectrum $\{\omega_n^{f}\}$ is completely determined by the spectrum $\{\omega_n^0\}$ of the `same' correlator at the undeformed UV fixed point. Finally, we note that while the focus of this paper are indeed the spectra related by the double-trace deformations, the scope of applicability of the formalism developed here is broader. In particular, one can make analogous statements for the pairs of meromorphic correlators that correspond to a pair of theories related by the Legendre transform. Moreover, a number of other applications were discussed previously in our Refs.~\cite{Grozdanov:2024wgo,Grozdanov:2025ner}, where the spectral duality relation used here (i.e., the so-called self-dual limit of the full relation) was augmented by the non-trivial algebraically special structure of the dual bulk spacetimes. Besides a rich array of statements about the analytic structure of retarded correlators and its implications for the spectra related via double-trace deformations, we find it likely that the framework described here could also provide many interesting insights into the behaviour of correlators when used in conjunction with the thermal bootstrap techniques \cite{Caron-Huot:2009ypo,Iliesiu:2018fao,Petkou:2018ynm,Karlsson:2019qfi,Karlsson:2022osn,Marchetto:2023xap,Ceplak:2024bja,Buric:2025anb,Barrat:2025nvu,Buric:2025fye}.

This paper is structured as follows: In Section~\ref{sec:meromorphic}, we discuss the properties of retarded meromorphic correlators, focusing on their decompositions, and remark on the generic properties of the spectra. We derive the spectral duality relation between zeroes and poles mentioned above, and discuss the well-known cases of spectra arranged into the shape of the so-called `Christmas tree', ubiquitous in holographic theories. In Section~\ref{sec:doubletrace}, we recap the formalism of the double-trace deformed RG flow and their action on two-point correlators. We discuss the $f\rightarrow \infty$ conformal fixed point and the associated Legendre transforms. Furthermore, we interpret the motion of $\{\omega^f_n\}$ as a dynamical system, with the double-trace coupling $f$ playing the role of time. We then turn to concrete examples. In Section~\ref{sec:2d}, we focus on the holographic example of a CFT$_2$ dual to the BTZ black hole, where the correlators of scalar primary operators can be expressed in closed form, verifying many of our results explicitly. In Section~\ref{sec:4d} we turn to the thermal $\CN=4$ supersymmetric Yang-Mills (SYM) plasma, where we compute the quasinormal modes of scalar primary operators holographically. We verify the spectral duality relation numerically and present numerical methods for the computation of the spectrum at any value of the double-trace coupling, with only the knowledge of the spectrum at the fixed point. Finally, in Section~\ref{sec:3d}, we briefly discuss how the formalism of this paper can be extended to current-current correlators. We focus on the case of CFT$_3$, where we describe how the spectral duality relation relates the spectra of theories related by the particle-vortex duality or Maxwell deformations. We conclude the paper with the discussion of the generality of our results and their possible extensions. Appendix~\ref{app:RT_CT} contains a number of details of real-time correlators and contact terms. Finally, Appendices~\ref{app:complex-analysis} and \ref{app:xmas} include the relevant mathematical details with theorems and proofs that we use to make our claims in the main body of the text.

\section{Meromorphic correlators}
\label{sec:meromorphic}
We begin with a discussion of various assumptions and complex analytic properties of correlators and their spectra. The main properties of these correlators, which are relevant for the description of large-$N$ thermal field theories, are their meromorphic structure in the frequency $\omega$ plane (see e.g.~Ref.~\cite{Grozdanov:2016vgg}), their asymptotic power-law behaviour with $\omega \to \infty$, and the thermal product formula of Ref.~\cite{Dodelson:2023vrw}. 

For concreteness, we restrict our present discussion to scalar (bosonic) and Hermitian operators $\CO$ in a unitary conformal field theory (CFT). The retarded correlator of $\CO$,
\begin{equation}
    G(t,\vb x)=i \Theta(t) \expval{[\CO(t,\vb x),\CO(0,0)]}+\text{possible contact terms} ,\label{def:retarded}
\end{equation} 
describes the causal response of $\CO$ to the external perturbation of the action (cf.~Eq.~\eqref{eq:green_response}) 
\begin{equation}\label{action_perturbed}
    S\rightarrow S + \int dt d\vb{x} \, J(t,\vb{x}) \CO(t,\vb{x}) + \ldots,
\end{equation}  
where $d^d x \equiv dt d\vb{x}$. All expectation values are evaluated in the thermal state as
\begin{equation}
    \expval{\CO}=\frac{\tr\left(\CO  e^{-\beta H}\right)}{\tr\left( e^{-\beta H}\right)}, \label{eq:expval}
\end{equation}
which we assume is translationally-invariant and isotropic. The possible contact terms (a linear combination of delta functions centred at $t=0$) in Eq.~\eqref{def:retarded} come from possible higher-order terms of $J$ that might appear in the deformed action \eqref{action_perturbed}. For details, see Section~\ref{sec:3d} and Appendix~\ref{subapp:contact}.

Throughout this paper, we will be working in Fourier space, which we define as 
\begin{equation}
    G(\omega, k)=\int dt d\vb{x}\, e^{i \omega t-i \vb{k} \cdot \vb{x}}G(t,\vb{x}).
\end{equation}
Due to isotropy, the correlators will only be functions of the magnitude of the wavevector (or momentum) $k=\sqrt{\vb{k}\cdot \vb{k}}$. Note also that in most of the expressions below, we will suppress their explicit momentum dependence.

\subsection{Analytic structure and decomposition}
\label{subsec:analytic_properties}
A number of analytic properties and assumptions about retarded correlators, which we state here, are necessary and sufficient for the derivation of our results. Discussions of their standard, well-understood properties can be found, e.g., in Refs.~\cite{Bellac:2011kqa,Meyer:2011gj,Kovtun:2012rj}, while various technical details and more advanced derivations of their complex analytic properties are presented in Appendix~\ref{app:complex-analysis}.

The fact that $G(\omega)$ is a retarded correlator of a stable theory is reflected in the fact that
\begin{equation}
    G(\omega) \text{ is holomorphic for }\Im(\omega)\geq0. \label{assumption:retarded}
\end{equation}
Furthermore, hermiticity of $\CO$ gives
\begin{equation}
    G(\omega)^*=G(-\omega^*), \label{eq:hermiticity0}
\end{equation}
and
\begin{equation}
    \omega \rho(\omega) \geq 0 \text{ for } \omega \in \mathbb{R},
\end{equation}
where we have introduced the spectral density function as the analytically extended imaginary part of $G(\omega)$:
\begin{equation}
    \rho(\omega)=\frac{G(\omega)-G(-\omega)}{2i}.
\end{equation}
As mentioned above, we will be exclusively interested in correlators which are meromorphic in $\omega$, i.e., they are holomorphic in the entire complex $\omega$ plane except at isolated points. Such functions are characterised by a countable set of poles and zeroes, which we assume are simple (i.e., of order one). While higher-order zeroes and poles do appear in examples considered here as a result of collisions of two simple zeroes or poles, it is sufficient to consider them as limiting cases of the present discussion. The set of poles, which we denote here by $\qty{\omega_n^+}$, is called the \emph{spectrum} of the correlator, and is defined as
\begin{equation}
    G(\omega_n^+)=\infty.
\end{equation}
As a consequence of analyticity in the upper complex half-plane, the spectrum consists of frequencies that lie in the lower complex half-plane:
\begin{equation}
    \Im\omega^+_n<0. \label{eq:stability}
\end{equation}
The set of zeroes, which we denote for now by $\qty{\omega_n^-}$, is defined through
\begin{equation}
    G(\omega_n^-)=0.
\end{equation}
In Section \ref{sec:doubletrace}, we will see that the zeroes of $G(\omega)$ correspond to a spectrum of a correlator in a theory deformed by a double-trace operator. Both poles and zeroes lie either on the imaginary axis, or appear as mirrored pairs:
\begin{equation}
    \text{either } \Re\omega^\pm_n=0\text{ or }\omega^\pm_n=-\qty(\omega_{n'}^\pm)^* \label{eq:hermiticity}
\end{equation}

The next relevant fact comes from using the asymptotic Minkowski-space operator product expansion (OPE) techniques \cite{Dodelson:2023vrw,Caron-Huot:2009ypo}, which imply that the large-real-$\omega$ behaviour of $\rho(\omega)$ is given by
\begin{equation}
    \omega\rightarrow \infty^+: \quad \rho(\omega)\sim \omega^{2\Delta-d}, \label{assumption:OPE}
\end{equation}
where $\Delta$ is the scaling dimension of $\CO$, and $d$ is the number of spacetime dimensions. We will assume that $\rho(\omega)$ grows at most as $\omega^{2\Delta-d}$ in all directions of the complex plane, as long as that direction does not cross any poles. More precisely, we assume that
\begin{equation}
    \max_{\abs{\omega}=R_N}\abs{\rho(\omega)}< A R_N^{2 \Delta-d}, \label{assumption:rho_bounded}
\end{equation}
for some large enough $A>0$ and some sequence $R_N$ such that $R_\infty = \infty$. Beyond the asymptotic operator product expansion (OPE) expectations, we offer two additional justifications for this assumption. Firstly, $\rho(\omega)$ is expected to have a `reasonable' real-time representation, which is guaranteed by the assumption \eqref{assumption:rho_bounded}. By this statement, we mean that a well-behaved and convergent Fourier transform of $\rho(\omega)$ exists in the real-time domain, or, in other words, that $\rho(t)=\expval{[\CO(t),\CO(0)]}/2$ is finite and convergent for all $t\neq 0$, with possible contact terms or divergences at $t=0$. Secondly, in all examples known to us, this assumption is in fact satisfied. This includes a large set of holographically motivated, but very general, cases discussed in Section~\ref{subsec:christmas}. The immediate implication of assumption \eqref{assumption:rho_bounded} is the fact that we can write $\rho(\omega)$ in terms of the partial fractions decomposition \cite{Casalderrey-Solana:2018rle,Grozdanov:2025ner} (see Appendix \ref{subapp:boundedness} for details):
\begin{equation}
\label{eq:rho_partial}
    \rho(\omega)=\sum_{m=0}^{p-1} \rho_m \omega^m+\omega^p \sum_{n} \frac{r_n}{(\omega_n^+)^p}\qty[\frac{1}{\omega-\omega^+_n}+\frac{(-1)^p}{\omega+\omega_n^+}],
\end{equation}
where $r_n$ are the residues of the lower half-plane poles of $\rho(\omega)$,
\begin{equation}
    r_n=\mathrm{Res}_{\omega_n^+}\rho(\omega),
\end{equation}
$\rho_m$ are the Taylor series coefficients in the expansion of $\rho(\omega)$ around $\omega=0$,
\begin{equation}
    \rho_m = \left. \frac{\partial_\omega^m\rho(\omega)}{m!}\right|_{\omega=0},
\end{equation}
and $p$ is a positive integer satisfying
\begin{equation}
    p>2\Delta-d.\label{eq:p_assumption}
\end{equation}
If $2\Delta-d<0$, the partial fractions decomposition simplifies and we can write
\begin{equation}
    \rho(\omega)=\sum_n \frac{r_n}{\omega-\omega^+_n}+\frac{r_n}{\omega+\omega^+_n}.
\end{equation}
We note that the sums above should be interpreted as
\begin{equation}
    \sum_n = \lim_{N\rightarrow\infty} \sum_{\abs{\omega_n^+}<R_N} \label{def:sum}
\end{equation}
to guarantee convergence. 

We further assume that $G(\omega)$ exhibits power-law behaviour at large complex $\omega$, which means that $G(\omega)$ cannot grow or decay too rapidly (note that we made no assumption on the decay of $\rho(\omega)$ in the complex plane). This is motivated by the physics of Section \ref{sec:doubletrace}, where we see that, in some cases, $-1/G(\omega)$ can be interpreted as a retarded correlator in its own right, and as such, it should not diverge too rapidly in the complex plane in order to have a sensible real-time representation. Specifically, we demand that, for a sequence $R_N$ (which may be different than the one used in Eq.~\eqref{assumption:rho_bounded}), we have
\begin{equation}
    \max_{\abs{\omega}=R_N}\abs{\partial_\omega \ln G(\omega)}\leq \frac{C}{R_N},\label{assumption:G_power}
\end{equation}
for some large enough $C>0$. This assumption is further justified the fact that it holds for all cases known to us (see Sections \ref{subsec:christmas}, \ref{sec:2d} and \ref{sec:4d}), and is, as shown below, rich with information about the distribution of poles and zeroes (see Appendix \ref{subapp:powerlaw} for details). We proceed by first defining partial products over poles and zeroes:
\begin{equation}
    \pi_N^\pm(\omega) = \prod_{\abs{\omega_n^\pm}<R_N}\qty(1-\frac{\omega}{\omega_n^\pm}). \label{def:products}
\end{equation}
We can then represent $G(\omega)$ in terms of an infinite product
\begin{equation}
    G(\omega)=G(0)\lim_{N\rightarrow\infty} \frac{\pi_N^-(\omega)}{\pi_N^+(\omega)},\label{eq:G_product_decomp}
\end{equation}
where no Weierstrass canonical factors are necessary. Note that the product decomposition of Eq.~\eqref{eq:G_product_decomp} implies that the retarded correlator is, up to a scaling constant, completely determined by its poles and its zeroes (i.e., there is no multiplicative or additive freedom in its functional dependence on $\omega$). The implications of this decomposition are discussed in Section~\ref{subsec:sdr}. On the other hand, we can express $G(\omega)$ in terms of poles and their respective residues as
\begin{subequations}
\label{eq:G_partial_decomp}
\begin{equation}
    G(\omega)=g(\omega^2)+i \sum_{m=0}^{p-1}\rho_m \omega^m +2i\omega^p \sum_{n} \frac{r_n}{(\omega^+_n)^p}\frac{1}{\omega-\omega^+_n},
\end{equation}
where $g(\omega^2)$ is a real polynomial, determined by a finite number of real constants (see Appendix \ref{subapp:OPE} for details). Again, if $2\Delta-d<0$, we can simplify the expansion and write
\begin{equation}
\label{eq:G_partial2}
    G(\omega)=g(\omega^2)+2i \sum_n \frac{r_n}{\omega-\omega_n^+}.
\end{equation}
\end{subequations}
The Fourier transform of the decomposition \eqref{eq:G_partial_decomp} then gives the real-time ringdown behaviour of Eq.~\eqref{eq:G_realtime}.

We now turn to the final, holographically motivated assumption, encapsulated in the \emph{thermal product hypothesis} of Ref.~\cite{Dodelson:2023vrw} (see also Appendix \ref{subapp:TPF}). It states that, for large-$N$ chaotic theories at finite temperature, 
\begin{equation}
    \frac{\sinh\frac{\beta\omega}{2}}{\rho(\omega)} \text{ is an entire function of order one.} \label{eq:entire}
\end{equation}
In other words, it says that the function in Eq.~\eqref{eq:entire} is holomorphic everywhere and hence the only zeroes of $\rho(\omega)$ are at the integer multiples of the imaginary Matsubara frequency $2\pi i /\beta$.\footnote{We remind the reader that an entire function of order one is defined as a function holomorphic in $\mathbb{C}$ that, for large enough $\abs{\omega}$ and any $\epsilon>0$, obeys $\ln\abs{f(\omega)}<\abs{\omega}^{1+\epsilon}$. See Appendix \ref{subapp:entire} for details.} Assumption \eqref{eq:entire} then implies a trivial Hadamard decomposition of $\rho(\omega)$ known as the \emph{thermal product formula}:
\begin{equation}
    \rho(\omega)=\frac{\mu \sinh\frac{\beta\omega}{2}}{\prod\limits_n \qty[1-\qty(\frac{\omega}{\omega_n^+})^2]}. \label{eq:TPF}
\end{equation}
Here, $\mu$ is a positive and $k$-dependent `constant', related to $G(\omega)$ through
\begin{equation}
    G(\omega)=G(0)+ \frac{i}{2} \mu \beta\omega +\ldots . \label{def:mu}
\end{equation}
Finally, we can then express the residues explicitly in terms of the spectrum
\begin{equation}
    r_n=-\mu\frac{\omega^+_n \sinh\frac{\beta\omega^+_n}{2}}{2\prod\limits_{\substack{m\\ m\neq n} }\qty[1-\qty(\frac{\omega^+_n}{\omega_m^+})^2]}. \label{eq:residues}
\end{equation}
In summary, all of these considerations imply that we can express the retarded correlator $G(\omega)$ in terms of its spectrum and at most a finite number of $k$-dependent constants, as stated in Eqs.~\eqref{eq:G_partial_decomp}.

We end this section by discussing the expected validity and potential limitations of the above assumptions. As mentioned before, the assumptions are inspired by the known properties of holographic QFTs with classical bulk duals. Due to the general absence of broadly accessible correlators computed by non-holographic methods, it is difficult to go beyond our holographic intuition and make precise claims about other (matrix) large-$N$ theories with no known holographic duals, such as, e.g., the large-$N$ Yang-Mills theory (see, however, Ref.~\cite{Dodelson:2024atp} for a simpler tractable examples and discussion). Strictly speaking, the discussion in this work therefore applies to correlation functions which can be computed from a single bulk wave equation in black hole backgrounds with a non-degenerate horizon. In such cases, given very general assumptions, the correlation functions are meromorphic (see, e.g., Ref.~\cite{Gulotta_2011}), and the thermal product formula holds. For states in which there are cross-correlations between single-trace operators (i.e., where $\expval{\CO \CO'}\neq 0$ for $\CO'\neq \CO$), the holographic dual exhibits a system of coupled wave equations, rendering the situation more subtle (this was addressed with a generalisation of the thermal product formula in Ref.~\cite{Bhattacharya:2025vyi}). Ref.~\cite{Dodelson:2023vrw}, however, conjectured that the thermal product formula should hold in more general, chaotic, large-$N$ theories. This is also the point of view adopted in this paper. Nevertheless, the precise set of field theoretic conditions that would ensure the validity of the above assumptions, and therefore the results of this paper in full generality, remains unknown. 

\subsection{Poles, zeroes, and the spectral duality relation}
\label{subsec:sdr}
The analytic properties and assumptions of Section \ref{subsec:analytic_properties} provide a rich set of constraints and identities on the behaviour of zeroes and poles of $G(\omega)$. Here, we state some of the properties concerning their distributions, and constraints between them, concluding the section with the spectral duality relation \cite{Grozdanov:2024wgo,Grozdanov:2025ner}. The derivations of these statements and various technical details are presented in Appendix~\ref{app:complex-analysis}.

We begin the discussion by noting that any modification of $G(\omega)$ with a real, polynomial contact term $q(\omega^2)$, in the form of
\begin{equation}
    \widetilde G(\omega) = G(\omega) + q(\omega^2),
\end{equation}
gives another function $\widetilde G(\omega)$ which satisfies all of the assumptions of Section \ref{subsec:analytic_properties}. While poles of $G(\omega)$ and $\widetilde G(\omega)$ coincide, their zeroes, denoted by $\qty{\omega_n^-}$ and $\qty{\widetilde \omega_n^-}$, respectively, do not. Any given $\widetilde \omega_n^-$ then corresponds to a point in the complex plane where $G(\omega)$ takes a certain value:
\begin{equation}
    \widetilde G(\widetilde \omega^-_n)=0 \quad \Leftrightarrow  \quad G(\widetilde \omega^-_n) = -\left.{q(\omega^2)}\right|_{\omega=\widetilde\omega_n^-}.
\end{equation}
Crucially, this means that any property of the set of zeroes of $G(\omega)$ is really a property of \emph{any} set $\qty{\widetilde \omega^-_n}$, characterised by $q(\omega^2)$. In this work, we will mostly be interested in cases in which $q$ is a constant, meaning that we will be describing the properties of sets of points where $G(\omega)$ takes a certain real value. This is a foreshadowing of the discussion of double-trace deformations in Section~\ref{sec:doubletrace}, where $q=1/f$, and $f$ is the double-trace coupling constant. 

The partial fractions decomposition in Eq.~\eqref{eq:G_partial_decomp} tells us that the zeroes of $G(\omega)$ are completely determined by the spectrum of poles, and a finite number of real constants, namely, the coefficients of the polynomial $g(\omega^2)$. The degree of $g(\omega^2)$ is usually constrained by some physical means.\footnote{Although we do not show that in this work, the degree of $g(\omega^2)$ is expected to not exceed $2\Delta-d$ for retarded correlators that are defined canonically (see Appendix~\ref{subapp:contact}).} In fact, the physically relevant cases in Section~\ref{sec:doubletrace} and the energy-momentum tensor calculation from Ref.~\cite{Grozdanov:2025ner} require only one such free parameter, which dictates the positions of all the zeroes. While the partial fractions decomposition is extremely useful for the numerical calculations presented in Sections~\ref{sec:2d} and \ref{sec:4d}, it leaves the properties of the distribution of zeroes in the complex plane obscured.

To uncover the implications of the above assumptions and the generic properties of correlators for the spectra of poles and zeroes, we first turn to their radial distributions. Let $n_+(R)$ and $n_-(R)$ count the number of poles and zeroes of $G(\omega)$ in a disc with radius $R$, centred at $0$, respectively. The assumption on the power-law asymptotics \eqref{assumption:G_power} intuitively tells us that the majority of poles and zeroes must lie close together, effectively `cancelling' one another in the large-$\omega$ region. This can be made precise by applying the argument principle to the inequality \eqref{assumption:G_power}, arriving at (see Appendix \ref{subapp:powerlaw} for details)\footnote{In the inequality \eqref{eq:density}, the constant $C$ is the same constant as the one that appears in the inequality \eqref{assumption:G_power}.}
\begin{equation}
    \label{eq:density}
    \liminf_{R\rightarrow\infty}\abs{n_+(R)-n_-(R)}\leq C,
\end{equation}
which tells us that the radial distributions of poles and zeroes are close to one another. Furthermore, the existence of a product decomposition in the sense of Eq.~\eqref{eq:G_product_decomp} implies the convergence of the following sum:
\begin{equation}
    \lim_{N\rightarrow \infty} \abs{\sum_{\abs{\omega_n^+}<R_N}\frac{1}{\omega_n^+}- \sum_{\abs{\omega_n^-}<R_N}\frac{1}{\omega_n^-}}<\infty. \label{eq:convergence_recip}
\end{equation}
The two sums in Eq.~\eqref{eq:convergence_recip} generally fail to converge independently. The overall convergence of \eqref{eq:convergence_recip} therefore implies that the poles and zeroes must lie sufficiently close to one another and makes the claim precise.

Next, we turn to the thermal product hypothesis \eqref{eq:entire}. The fact that the canonical Hadamard product $\prod_n[1-(\omega/\omega_n^+)^2]$ needs to be a function of order one tells us that $n_+(R)$ must increase, in some sense, linearly with $R$. A standard theorem by Borel (see Appendix \ref{subapp:entire} for details) then states that this happens precisely when\footnote{In known, well-behaved examples (such as the Christmas tree configurations of Section~\ref{subsec:christmas}), stricter forms of \eqref{eq:density} and \eqref{eq:limsup} hold, namely, that $\abs{n_+(R)-n_-(R)}<D$, for some constant $D$, and $\lim_{R\rightarrow\infty}{\ln n_+(R)}/{\ln R}=1$.
In this case, it is also straightforward to show that
$\lim_{R\rightarrow\infty}{\ln n_-(R)}/{\ln R}=1.$ While these simpler, more constraining expressions are expected to hold for correlators emerging from QFT calculations, explicit counterexamples may be constructed (e.g., $\ln\ln n_+(R)={\lfloor\ln\ln R\rfloor}$ and $\ln\ln n_-(R)={\lfloor2\ln\ln R\rfloor/2}$). Such counterexamples are characterised by an uneven radial distribution of poles and zeroes, which may be `unphysical', although we are not aware of any physical principle that forbids them.
}
\begin{equation}
\label{eq:limsup}
    \limsup_{R\rightarrow\infty}\frac{\ln n_+(R)}{\ln R}=1.
\end{equation}

We can now construct an entire function $S(\omega)$ with zeroes at $\qty{\omega_n^+}$ and $\qty{-\omega_n^-}$. To do so, we use the partial products defined in Eq.~\eqref{def:products}:
\begin{equation}
    S(\omega)=\lim_{N\rightarrow\infty}\pi^+_N(\omega) \pi^-_N(-\omega), \label{def:S}
\end{equation}
which provides a convergent prescription for taking the product in Eq.~\eqref{def:S_intro}. As explained in Appendix \ref{subapp:SDR}, we can use the product decomposition of Eq.~\eqref{eq:G_product_decomp}, in conjunction with the thermal product formula of Eq.~\eqref{eq:TPF}, to show that the zeroes and the poles of $G(\omega)$ are constrained by the \emph{spectral duality relation} \cite{Grozdanov:2024wgo}:
\begin{equation}
    S(\omega)-S(-\omega)=2i\lambda \sinh\frac{\beta\omega}{2}. \label{eq:SDR}
\end{equation}
Here, we have
\begin{equation}
    \lambda=-\frac{\mu}{G(0)}, \label{def:lambda}
\end{equation}
where we emphasise that, unlike $\mu$ or $G(0)$, $\lambda$ depends only on the locations of poles and zeroes---it is a property of the spectrum. The spectral duality relation presents a stringent, infinite set of constraint equations (i.e., one for each value of $\omega$) on the poles and zeroes of $G(\omega)$.

We now make a few general remarks about the spectral duality relation \eqref{eq:SDR}. Firstly, since it  presents an infinite set of equations, we can always recover any finite part of the spectra of zeroes and poles if we know the remaining (infinite) set of them. For example, we can evaluate Eq.~\eqref{eq:SDR} at integer multiples of the imaginary Matsubara frequency, $\Omega_m=-2\pi i m/\beta$ to eliminate $\lambda$. The spectral duality relation then gives an infinite number of equations relating the spectra, one for each integer $m$:
\begin{equation}
    \prod_n \qty(1-\frac{\Omega_m}{\omega_n^+})\qty(1+\frac{\Omega_m}{\omega_n^-})=\prod_n \qty(1+\frac{\Omega_m}{\omega_n^+})\qty(1-\frac{\Omega_m}{\omega_n^-}), \quad m=\ldots,-1,0,1,\ldots.
\end{equation}
Alternatively, we can expand the spectral duality relation \eqref{eq:SDR} around $\omega=0$ to find another infinite set of identities. To proceed, we define a set
\begin{equation}
    \CW_N = \qty{\frac{1}{\omega_n^-}; \:\abs{\omega_n^-}<R_N}\cup\qty{-\frac{1}{\omega_n^+}; \:\abs{\omega_n^+}<R_N}.
\end{equation}
We can then express $\lambda$ in terms of the elementary symmetric polynomials\footnote{We remind the reader that the elementary symmetric polynomials are defined as
\begin{equation}
    e_j(\qty{w_1,w_2,\ldots, w_n})=\sum_{1\leq a_1<a_2<\ldots a_{j}\leq n}w_{a_1}w_{a_2}\ldots w_{a_j}, \nonumber
\end{equation}
and are invariant with respect to the permutations of their argument.}
\begin{equation}
    \frac{i\lambda}{(2j+1)!}\qty(\frac{\beta}{2})^{2j+1} = \lim_{N\rightarrow\infty} e_{2j+1}(\CW_N). \label{eq:lambda_e}
\end{equation}
The convergence of the limit above is ensured by the convergence of the sum in Eq.~\eqref{eq:convergence_recip}, which corresponds to the $j=0$ case. These equations should not be merely understood as expressions for $\lambda$, but as an infinite set of equations (i.e., one for each $j$) relating the spectra of poles and zeroes, as encoded in the set $\CW$. The even part of $S(\omega)$ is fixed, up to a finite number of constants, with
\begin{equation}
    S(\omega)=-\lambda G(-\omega)\frac{\sinh\frac{\beta\omega}{2}}{\rho(\omega)},
\end{equation}
by using the partial fractions decomposition \eqref{eq:G_partial_decomp} and the expression for the residues from Eq.~\eqref{eq:residues}. 

The validity of the spectral duality relation and the statements that follow from it will be demonstrated with the closed-form retarded correlator from AdS${}_3$/CFT${}_2$ discussed in Section~\ref{sec:2d}, as well as the higher-dimensional numerical computation in AdS${}_5$/CFT${}_4$ presented in Section~\ref{sec:4d}. Moreover, in Refs.~\cite{Grozdanov:2024wgo,Grozdanov:2025ner}, we verified the spectral duality relation for the energy-momentum tensor and conserved current correlators in a holographic CFT$_3$, as well as for the self-dual correlators of the linear axion model computed in Ref.~\cite{Davison:2014lua}. 

\subsection{Generalised Christmas tree configuration \faTree}
\label{subsec:christmas}
The discussion of the powerful and constraining properties of the meromorphic large-$N$ thermal correlators above may appear somewhat technical and abstract. Thus, to illustrate their far-reaching implications in holographically-motivated setups, we consider the specific configurations of spectra that follow the pattern of a `generalised Christmas tree', where the poles of the correlator are organised into asymptotic branches. Such configurations are universal in theories with classical holographic duals \cite{Kovtun_2005,Waeber:2015oka,Grozdanov:2016vgg,Casalderrey-Solana:2018rle,Grozdanov:2018gfx,Dodelson:2023vrw} (see also Refs.~\cite{Konoplya:2011qq,Berti:2009kk,Jansen:2017oag} and \cite{Motl:2003cd,Cardoso:2004up,Natario:2004jd,Cardoso:2003cj,Konoplya:2003ii}), but may well hold in much larger classes of thermal large-$N$ theories and even beyond the maximally chaotic `large-$N$' limit (see Refs.~\cite{Choi:2020tdj,Dodelson:2024atp}). 

The first important statement is that the properties described above ensure that the directions of the asymptotic lines of zeroes and poles coincide in the complex plane with asymptotically fixed spacings between the neighbouring zeroes and poles. The relative asymptotic offset between the zeroes and poles then dictates the large-$\omega$ behaviour of $G(\omega)$ in the complex plane, which is, remarkably, the same in all directions of the complex plane. The second important result is a set of identities relating the parameters of the Christmas tree branches with the temperature and the scaling dimension of the operator, generalising the statements from Ref.~\cite{Dodelson:2023vrw}. The statements presented here are proven in Appendix~\ref{app:xmas}.

We consider cases in which the spectrum of poles and zeroes of $G(\omega)$ can be split into $M$ asymptotic branches, indexed with $j$, so that
\begin{equation}
    \qty{\omega_n^\pm; n=1,2,\ldots }=\qty{\omega_{m,j}^\pm; \: \mqty{m=1,2,\ldots \\ j=1,\ldots,M}},
\end{equation}
with
\begin{equation}
    \omega^\pm_{m,j} = d^\pm_j (m+\xi_j^\pm)+\CO(m^{-\epsilon}),\label{eq:omega_expansion}
\end{equation}
for some $\epsilon>0$. The complex parameter $d_j^\pm$ describes the asymptotic direction and spacing of the branch, and the complex parameter $\xi_j^\pm$ describes the displacement of the branch. Hermiticity of the operator $\CO$ (cf.~Eq.~\eqref{eq:hermiticity}) ensures that the branches must either be imaginary (i.e., $\Re d_j^\pm=0$ and $\Im\xi_j^\pm=0$) or must appear in mirrored pairs (i.e., $d_j^\pm=-(d_{j'}^\pm)^*$ and $\xi_j^\pm=(\xi_{j'}^\pm)^*$). The key insight into the behaviour of the asymptotic branches with respect to the assumptions outlined in Section~\ref{subsec:analytic_properties} comes from the assumption \eqref{assumption:G_power}, namely, the power-law asymptotics of $G(\omega)$. As is shown in Appendix~\ref{subapp:xmas-retarded}, the only way in which $G(\omega)$ can exhibit the asymptotic power-law behaviour in all directions of the complex plane is if the branches of zeroes and poles match in their asymptotic direction as well as spacing, i.e.,
\begin{equation}
    d_j^+=d_j^-. \label{eq:branches}
\end{equation}
With this, the convergence of the product expansion \eqref{eq:G_product_decomp} is ensured. Furthermore, the assumption about the spectral asymptotics made in our Ref.~\cite{Grozdanov:2024wgo} is thereby proven to hold for all the Christmas-tree-like spectra that have the form \eqref{eq:omega_expansion}. Note that the displacement parameters $\xi_j^\pm$ do not need to match. We introduce
\begin{equation}
    \sigma_j=\xi^-_j-\xi_j^+ ,\label{eq:displacement}
\end{equation}
where $\sigma_j$ is generally expected to be real (see examples in Sections \ref{sec:2d} and \ref{sec:4d}), and parameterises the relative offset between the asymptotic branches of poles and zeroes (see Fig.~\ref{fig:illustration}). Their sum defines the relative offset index
\begin{equation}
    \sigma = \sum_j \sigma_j \label{def:sigma},
\end{equation}
which dictates the large-$\omega$ power-law behaviour of $G(\omega)$:
\begin{equation}
    \abs{\omega}\rightarrow\infty: \quad G(\omega) \sim \omega^{-\sigma}, \quad \arg\omega \neq \arg d_j.  \label{eq:G_asymptotics}
\end{equation}
Remarkably, as soon as we demand that the leading asymptotic behaviour of $G(\omega)$ has a power-law scaling, the corresponding exponent becomes the same in all directions of the complex plane that avoid the poles. Note that some finite parts of the spectra might not naturally fit into the asymptotic branches. They must be accounted for by attaching them to an asymptotic line. For example, an addition of a single pole to the correlator amounts to the displacement $\xi_j^+\rightarrow\xi_j^+-1$ for an arbitrary branch $j$, which in turn amounts to $\sigma\rightarrow\sigma+1$ (see Fig.~\ref{fig:illustration}). This can also be understood by factoring out a single pole from the product expansion of $G(\omega)$. A similar statement holds for the spectrum of zeroes.

\begin{figure}
    \centering
    \includegraphics[width=1\linewidth]{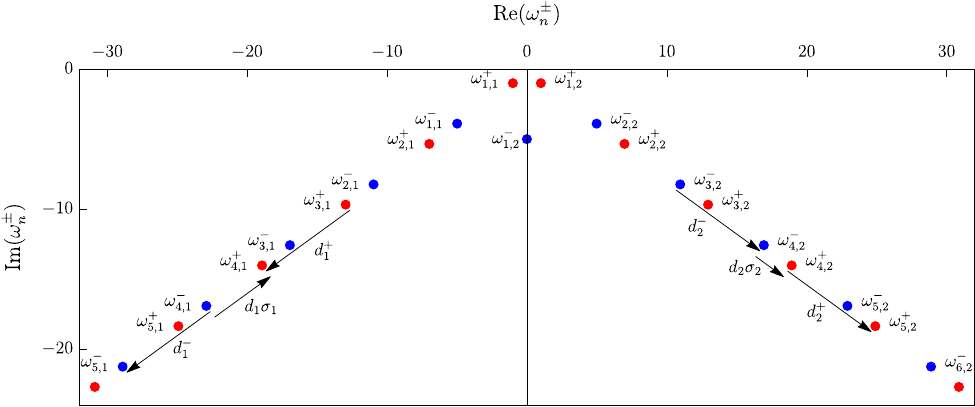}
    \caption{A toy example of a spectrum of poles (red) and zeroes (blue). We have two branches, $\omega_{m,1}$ and $\omega_{m,2}$ with, $d_1^+=d_1^-=d_1$ and $d_2^+=d_2^-=d_2$. The zero on the imaginary line is an element of $\{\omega_{m,2}\}$, which displaces its indexing relative to its mirror counterpart. The spacings $d^\pm_{j}$ are the same for the poles and the zeroes, while the relative offsets are $\sigma_1=-2/3$ and $\sigma_2=1/3$. This gives $G(\omega)\sim \omega^{-1/3}$. }
    \label{fig:illustration}
\end{figure}

Note that this analysis is not sensitive to the logarithmic corrections to the leading behaviour of the form $\omega^{-\sigma} \ln^\gamma \omega$, which do appear in specific cases of correlators (see e.g. \cite{Son:2002sd,Romatschke_2009,Casalderrey-Solana:2018rle}). This uniformity of the power-law behaviour in all directions of the complex plane and the lack of Stokes lines might be a universal feature of a more general class of meromorphic functions, described in Section \ref{sec:meromorphic}. Furthermore, it is expected that $\sigma=d-2\Delta$ for retarded correlators that are not modified by contact terms, as is shown in the examples of Sections \ref{sec:2d} and \ref{sec:4d}.

Parameters describing the spectrum of poles, namely $d_j^+$ and $\xi_j^+$ (and through Eq.~\eqref{eq:branches}, the spectrum of zeroes), are subject to constraints that follow from the properties described in Section \ref{subsec:analytic_properties}. Firstly, the retardedness property \eqref{eq:stability} gives $\Im d_j^+<0$. More interesting constraints come from the thermal product formula. Performing an asymptotic expansion (see Appendix~\ref{subapp:xmas-OPE} for details) on both sides of Eq.~\eqref{eq:TPF} (using the large-$\omega$ behaviour of $\rho(\omega)$ from Eq.~\eqref{assumption:OPE}) gives a constraint on $d_j^+$ from the leading term,
\begin{equation}
    -i\sum_{j} \frac{1}{d_j^+}=\frac{\beta}{2\pi}, \label{eq:christmas_constraint}
\end{equation}
and a constraint on $\xi_j^+$ from the subleading term,
\begin{equation}
    \sum_j 1+2\xi_j^+ = 2\Delta-d, \label{eq:xmas_constraint}
\end{equation}
generalising the results of Ref.~\cite{Dodelson:2023vrw}. It is interesting to note that the constraint \eqref{eq:christmas_constraint} has an interesting implication in cases with two branches (a single pair of branches typically seen in holographic examples) of solutions with $d_1=-d_2^*=\abs{d} e^{i\theta}$. There, we have $\beta \abs{d}=4\pi \abs{\sin\theta}$, which means that the spacing between the poles goes to zero as $\theta\rightarrow 0$. This can be understood as a formation of a branch cut along the real axis, and appears in the weak coupling limit of the holographic models, as discussed in Ref.~\cite{Grozdanov:2016vgg} (see also Ref.~\cite{Arean:2020eus,Arnaudo:2024sen}).

We can also consider branches of poles with $\omega_{m,j} \sim m^{\alpha_j}$, for some $\alpha_j \neq 1$. Such configurations appear, for example, in holographic models with a higher-derivative bulk action \cite{Dodelson:2023vrw}. In such cases, it is more difficult to make general statements. The order-one property of the thermal product formula \eqref{eq:entire} (more specifically, Eq.~\eqref{eq:limsup}) immediately tells us that we \emph{must} have at least one branch with $\alpha_j=1$, with all the other branches having $\alpha_j\geq 0$. As shown in Appendix~\ref{subapp:xmas-OPE}, the constraint \eqref{eq:christmas_constraint} still holds, with the sum only taken over branches with $\alpha_j=1$.

\subsection{Comments on sum rules and pole-skipping}
In this Section, we remark on a few additional properties and identities of meromorphic correlators that can be inferred from the above discussion. 

First, note that the decomposition \eqref{eq:G_partial_decomp} implies the convergence of the sums of the form:
\begin{equation}
    \label{eq:res_convergence}
    p>2\Delta-d: \quad \sum_n \frac{r_n}{(\omega_n^+)^{p+1}}\text{ converges},
\end{equation}
where, again, the sum must be taken in the sense of Eq.~\eqref{def:sum}. For small $p$, these sums have special significance. For operators with a low-enough scaling dimension, we have
\begin{equation}
    2\Delta - d <1: \quad \sum_n \frac{\sinh \frac{\beta \omega^+_n}{2}}{\omega^+_n \prod\limits_{\substack{m\\m\neq n}}\qty[1-\qty(\frac{\omega_n^+}{\omega_m^+})^2]} = \frac{\beta}{2}, \label{eq:sumrule}
\end{equation}
which is a recasting of the sum rule from Ref.~\cite{Grozdanov:2024wgo}. Furthermore, the spectral parameter $\lambda$ can be expressed purely in terms of the spectrum of poles, again, for a sufficiently small scaling dimension,
\begin{equation}
    2\Delta - d <0: \quad \sum_n \frac{\sinh \frac{\beta \omega^+_n}{2}}{ \prod\limits_{\substack{m\\m\neq n}}\qty[1-\qty(\frac{\omega_n^+}{\omega_m^+})^2]} = \frac{i}{2\lambda}. \label{eq:lambda}
\end{equation}
Eqs.~\eqref{eq:sumrule} and \eqref{eq:lambda} can be derived by matching the first two Taylor series coefficients of the partial fractions decomposition of $G(\omega)$ around $\omega=0$ to Eq.~\eqref{def:mu} (see Eqs.~\eqref{eq:G_partial_decomp}).

Next, we turn to the phenomenon of \emph{pole-skipping} \cite{Grozdanov:2017ajz,Blake:2018leo,Blake:2017ris} \cite{Grozdanov:2019uhi,Blake:2019otz} (see also an earlier work \cite{Amado:2008ji} and Refs.~\cite{Grozdanov:2018kkt,Grozdanov:2020koi,Wang:2022mcq,Natsuume:2019xcy,Ahn:2020baf,Blake:2021hjj,Ceplak:2019ymw,Abbasi:2020xli,Grozdanov:2023tag,Choi:2020tdj,Chua:2025vig})), which is a property of thermal correlators when their value becomes `$0/0$'. While pole-skipping is intimately connected with quantum chaos, it also has implications for the thermal spectra of holographic theories, CFTs in general, and perhaps beyond. Specifically, it can be used to reconstruct the thermal spectrum \cite{Grozdanov:2022npo,Grozdanov:2023tag}, or understand the behaviour of the momentum diffusion mode \cite{Grozdanov:2020koi,Blake:2019otz,Grozdanov:2025ner,Grozdanov:2023txs} with relatively simple computations involving only the horizon physics of a black hole.

More precisely, the pole-skipping points, denoted by $(\omega_*,k_*)$ are points in the (complex) $(\omega,k)$ space, where the lines of poles, $\omega_+(k)$, and lines of zeroes $\omega_-(k)$ intersect, as can be understood from the product decomposition of Eq.~\eqref{eq:G_product_decomp}. Specifically, we demand that 
\begin{equation}
   \omega_*= \omega_n^+(k_*)=\omega_n^-(k_*).
\end{equation}
In terms of the spectral duality relation, this gives
\begin{equation}
    S(\omega_*,k_* )=S(-\omega_*, k_*).
\end{equation}
Suppose now that the thermal product formula \eqref{eq:TPF} holds with some finite, non-zero $\lambda$. The spectral duality relation \eqref{eq:SDR} then states that the pole-skipping frequency must be a multiple of the negative imaginary Matsubara frequency, $\Omega_m=-2\pi i m/\beta$ (see Section \ref{subsec:sdr}):
\begin{equation}
    \omega_* = \Omega_m. \label{eq:ps_matsubara}
\end{equation}
While this statement is clear in holographic terms, the applicability of the thermal product formula might translate it to a broader spectrum of theories. 

An interesting violation of Eq.~\eqref{eq:ps_matsubara} appeared in Ref.~\cite{Choi:2020tdj}, which treats the Sachdev-Ye-Kitaev (SYK) chain at finite coupling and temperature. There, the authors derive an explicit form of the retarded energy-energy correlator, and find pole-skipping frequencies at non-integer multiples of the Matsubara frequency. Nevertheless, Ref.~\cite{Dodelson:2023vrw} found that the thermal product formula remains valid in that case. The resolution lies in the fact that, for this SYK chain, $\lambda$ tends either to zero or infinity at the pole-skipping values of momentum. The argument above does therefore not apply. It would be interesting to understand the pole-skipping discussion and possible violations of Eq.~\eqref{eq:ps_matsubara} in terms of the recently proposed modification of the thermal product formula \cite{Bhattacharya:2025vyi} (see also the discussion in Section~\ref{sec:conclusion}).

\section{CFTs related by double-trace deformations}
\label{sec:doubletrace}

In Section~\ref{subsec:sdr}, we saw that the spectral duality relation applies to poles and zeroes of any retarded correlator that has the properties required by the thermal product hypothesis of Ref.~\cite{Dodelson:2023vrw}. On the other hand, in the original context of Ref.~\cite{Grozdanov:2024wgo}, the spectral duality relation was derived for the poles (QNMs) of pairs of S-dual correlators in AdS${}_4$/CFT${}_3$. Here, we show how the spectral duality relation also applies much more broadly to meromorphic retarded correlators related by double-trace deformations in large-$N$ QFTs in any dimension. Specifically, in this Section, we focus on double-trace deformed CFTs. Since all the necessary facts about such CFTs are well known, we will only state the main results. For the literature on this topic, see Refs.~\cite{Gubser:2002vv,Witten:2001ua,Klebanov:1999tb,Klebanov:2002ja,Papadimitriou:2007sj,Faulkner:2010jy,Heemskerk:2010hk,Grozdanov:2011aa}. As we will see below, one of the central results of this Section is a relation between the spectra of the ultraviolet (UV) and the infrared (IR) CFTs (fixed points) that are related by an RG flow triggered by a double-trace deformation, as well as a relation between the spectra at any point along the flow. 

\subsection{Double-trace deformations and the spectral duality relation}
\label{subsec:double_trace}
Consider a conformal field theory in the large-$N$ limit in any number of spacetime dimensions $d$. For simplicity and concreteness, we will limit our discussion to scalar primary operators (see Section~\ref{sec:3d} for the case of current operators). Specifically, in this Section, we consider a single-trace scalar primary operator $\CO$ with the dimension of $\CO$ above the unitarity bound $\Delta > d/2 - 1$. We write down equations schematically, ignoring real-time issues, which are deferred to Appendix \ref{app:RT_CT}. We also assume that the operator does not condense, i.e., that $\expval{\CO}_{J=0}=0$, and that we are considering the state with external sources turned off (i.e., with no single-trace deformation).

The generating function of the CFT in the presence of a source $J$ is written as 
\begin{equation}
    e^{\Gamma[J]} = \left\langle e^{\int \CO J} \right\rangle,
\end{equation}
with the expectation value computed in the thermal state, as noted by Eq.~\eqref{eq:expval}. We assume that $\Gamma[J]$ admits an expansion, which we schematically write as 
\begin{equation}
    \Gamma[J]= \int\frac{1}{2}G_0 J^2+O(J^3). \label{gamma_expansion}
\end{equation}
In this notation, $G_0=\expval{\CO\CO}$ is simply the two-point correlation function computed in the thermal state.

We now study a double-trace deformation of that CFT, namely, the partition function
\begin{equation}
    e^{\Gamma[J;f]} = \left\langle e^{\int \CO J - \frac{f}{2} \int \CO^2} \right\rangle,
\end{equation}
where $f$ is the double trace coupling constant. $\Gamma[J,f]$ admits and expansion
\begin{equation}
    \Gamma[J;f]=\int\frac{1}{2}G(f) J^2+O(J^3).
\end{equation}
Assuming the validity of the saddle-point approximation, one can compute $G(f)$ via the Hubbard-Stratonovich trick by performing the path integral
\begin{equation}
    \expval{e^{\int \CO J-\frac{f}{2}\int \CO^2}}=\int \CD \widetilde J\expval{e^{\int 
    \CO \widetilde J+\frac{1}{2f}\int \qty(J-\widetilde J)^2}}. \label{eq:Hubbard-Stratonovich}
\end{equation}
This gives $G(\omega;f)$ in terms of $G_0$:
\begin{equation}
    G(\omega;f)=\frac{G_0(\omega)}{1 + f G_0(\omega)}. \label{eq:G_f}
\end{equation}
Even though the derivation here was Euclidean, Eq.~\eqref{eq:G_f} also holds for the real-time retarded (or advanced) correlators. This is transparent in holographic systems, where the double-trace deformation corresponds to mixed boundary conditions at the anti-de Sitter boundary, preserving the ingoing nature at the horizon. A QFT derivation of this fact is provided in Appendix~\ref{subapp:trace}.

Eq.~\eqref{eq:G_f} immediately tells us that the spectrum $\{\omega_n^f\}$, obeying $G(\omega_n^f;f)=\infty$ corresponds to the solutions to $G_0(\omega_n^f)=-1/f$ (see Eq.~\eqref{def:f_qnms}). This, in turn, tells us that we can use the partial fractions decomposition \eqref{eq:G_partial_decomp} of $G_0$ and then solve for $\omega_n^f$. We implement this procedure numerically in our 4$d$ CFT example studied in Section~\ref{sec:4d}.

In these theories, it is simple to relate the correlators at any two couplings with an equation analogous to the self-duality relation of Refs.~\cite{Herzog_2007,Grozdanov:2024wgo,Grozdanov:2025ner}. Namely, we can write
\begin{equation}
    \left[ G (\omega;f_1) \Delta f + 1 \right]\left[ G (\omega;f_2) \Delta f -1\right]=-1, \label{scalarEq}
\end{equation}
where $\Delta f = f_2-f_1$. Crucially, both factors on the left-hand side of Eq.~\eqref{scalarEq} obey all the assumptions about retarded correlators in Section \ref{subsec:analytic_properties}. Consequently, the zeroes of $\qty[G(\omega;f_1)\Delta f+1]$ are the poles of $\qty[G(\omega;f_2)\Delta f-1]$, and vice versa. Hence, the discussion of Section~\ref{subsec:sdr} applies directly to these modified `correlators'. In addition, Eq.~\eqref{scalarEq} provides us with another crucial piece of information, which is the fact that the zeroes of, say, $\qty[G(\omega;f_1)\Delta f+1]$, can be interpreted as a spectrum in their own right, and must therefore lie in the lower complex half-plane. This holds for the values of the couplings which are physically justified (see Sections~\ref{sec:2d} and \ref{sec:4d}). The consequence of this is the fact that the function $S(\omega)$, defined as 
\begin{equation}
    S(\omega)=\prod_n\qty(1-\frac{\omega}{\omega_n^{f_1}})\qty(1+\frac{\omega}{\omega_n^{f_2}}), 
\end{equation}
unambiguously encodes all the information about both the spectra. The zeroes of $S(\omega)$ are naturally split into those in the lower and those in the upper half-plane. We can now write a version of the spectral duality relation \eqref{eq:SDR} as
\begin{equation}
    S(\omega)-S(-\omega)=2i\lambda_{12}\sinh\frac{\beta\omega}{2}.
\end{equation}
The parameter $\lambda_{12} = \lambda_{12}(f_1,f_2)$ is a function of two arbitrary values of the couplings $f_1$ and $f_2$:
\begin{equation}
    \lambda_{12}(f_1,f_2)=\lambda_0 \frac{(f_1-f_2)G_0(0)} {(1+f_1 G_0(0))(1+f_2 G_0(0))}, \label{eq:lambda_of_f}
\end{equation}
where $\lambda_0$ corresponds to the constant (a function of $k$) that fixes the spectral duality relation for poles and zeroes of the undeformed $G_0$ and can be expressed either via Eq.~\eqref{def:lambda}, Eq.~\eqref{eq:lambda}, or by using the original spectrum of poles via Eq.~\eqref{eq:lambda_e} (where $\omega_n^+=\omega^0_n$). Note that the behaviour of $\lambda_{12}$ in terms of the couplings $f_1$ and $f_2$ makes it transparent that the odd part of $S(\omega)$ `measures' how different the two spectra are. As is clear from the definition of $S(\omega)$, if the spectra coincide, then $S(\omega)$ is an even function, and we get $\lambda=0$.\footnote{In the zero cosmological constant case of the spectral duality relation in AdS${}_4$ discussed in Refs.~\cite{Grozdanov:2024wgo,Grozdanov:2025ner}, this case is equivalent to the isospectrality of even and odd quasinormal modes in the asymptotically flat 4$d$ Einstein gravity.}

\subsection{The fixed-point limit and the Legendre transform}
\label{subsec:UV-IR-CFT}
Suppose that the (scalar) operator $\CO=\CO_-$ has scaling dimension $d/2-1<\Delta_-<d/2$, with the corresponding undeformed retarded correlator being $G_0=G_-$. Then, the double-trace deformation $-\frac{f}{2}\CO_-^2$ is relevant, and triggers the RG flow from the UV fixed point (CFT$_-$) to an IR (CFT$_+$) fixed point with an operator $\CO_+$ with conformal dimension $\Delta_+$ and a retarded correlator $G_+$. Reaching the IR fixed point corresponds to the (appropriately regularised) $f\rightarrow \infty$ limit. The two conformal dimensions are related as $\Delta_+ = d - \Delta_-$, and the IR and UV fixed points are related by a Legendre transform of the generating functional $\Gamma[J]$.

In Eq.~\eqref{eq:Hubbard-Stratonovich}, the $f\rightarrow \infty$ limit cannot be taken directly. Instead, we can promote the source into a dynamical field (see Eq.~\eqref{eq:Hubbard-Stratonovich})
\begin{equation}
    \expval{e^{\int O_+ J_+}}_+=\int \CD J_- \expval{e^{\int \CO_- J_- +\int J_+ J_-}}_-, \label{eq:J_pathintegral}
\end{equation}
where $\expval{\ldots}_\pm$ denotes the thermal expectation value in the CFT$_\pm$. In terms of the holographic dictionary, Eq.~\eqref{eq:J_pathintegral} swaps between the Dirichlet and the Neumann boundary conditions. In terms of the generating functionals, we have 
\begin{subequations}
\begin{align}
    e^{\Gamma_+[J_+]}&=\expval{\delta(\CO_-+J_+)}_-,\\
    e^{\Gamma_-[J_-]}&=\expval{\delta(\CO_+-J_-)}_+,
\end{align}    
\end{subequations}
which is a statement about $\Gamma_-[J_-]$ and $\Gamma_+[J_+]$ being the Legendre transforms of one another. In other words, the connected generating functional of one theory is a one-particle irreducible (1PI) generating functional of the other. In the saddle point approximation, this gives a relation between the retarded correlators in both of the theories:
\begin{equation}
    G_+(\omega) G_-(\omega)=-1.\label{eq:UV_IR}
\end{equation}
The poles of one correlator again correspond to the zeroes of the other one. This immediately implies a spectral duality relation between the UV and IR CFTs, and more generally, theories related by a Legendre transformation.

We can deform either of the CFTs with a double-trace deformation
\begin{subequations}
\begin{align}
    e^{\Gamma_-[J_-;f_-]} = \left\langle e^{\int \CO_- J_- - \frac{f_-}{2} \int \CO_-^2} \right\rangle_-, \label{Zf+}\\
    e^{\Gamma_+[J_+;f_+]}= \left\langle e^{\int \CO_+ J_+ - \frac{f_+}{2} \int \CO_+^2} \right\rangle_+\label{Zf-},
\end{align}    
\end{subequations}
where the deformation is relevant in the former and irrelevant for the latter case. When the coupling constants obey
\begin{equation}
    f_+ f_- = -1, \label{eq:couplings}
\end{equation}
the generating functionals are just different formulations of the same theory, related by
\begin{equation}
    \Gamma_-[J_-;f_-] = \Gamma_+\qty[J_-/f_-;-1/f_-] + \int\frac{1}{2 f_-} J_-^2.
\end{equation}
The retarded correlators are then related via
\begin{subequations}
\label{eq:green_relation}
 \begin{align}
    G_-(\omega;-1/f_+)&=f_+\qty[f_+ G_+(\omega;f_+)-1],\\
    G_+(\omega;-1/f_-)&=f_-\qty[f_- G_-(\omega;f_-)-1], 
\end{align}   
\end{subequations}
where $G_\pm(\omega;0)\equiv G_\pm(\omega)$ are the correlators at the fixed points. This rescales the residues and modifies the contact terms of the correlators, but does not alter the spectrum. Finally, using either one of the identities \eqref{eq:green_relation} in Eq.~\eqref{scalarEq} then yields a well-defined $f_1\rightarrow 0$, $f_2\rightarrow \infty$ limit, reducing it to Eq.~\eqref{eq:UV_IR}.

It is important to note that an analogous discussion could be repeated for any pair of theories related by the Legendre transform with the corresponding correlators, spectra and zeroes obeying various (duality) relations, which are stringently constrained by the expressions discussed here. 

\subsection{Motion of the poles as a dynamical system}
\label{subsec:dynamical}
In the final part of this Section, we find it interesting to point out that the motion of poles as a function of the double-trace coupling $f$ in the complex $\omega$ plane can be understood as a dynamical system. To see this, take a derivative of Eq.~\eqref{eq:G_f} with respect to $f$, which gives a `dynamical equation' for the correlator in the space of the double-trace coupling:
\begin{equation}
    \frac{d G(\omega;f)}{df}= -G(\omega;f)^2. \label{dyn_eq}
\end{equation}

Let us now assume that we can parameterise the partial fractions expansion of $G(f)$ with $f$ in a continuous way. More specifically, we assume that $\omega_n(f) \equiv \omega_n^f$ are continuous and (at least once) differentiable functions of $f$. Then, using the partial fractions decomposition of $G(f,\omega)$ and matching the principal parts in $\omega$, we get a dynamical equation for the poles:
\begin{equation}
    \frac{d \omega_n^f}{df}=-2i r_n^f , \label{dyn_eq_pole}
\end{equation}
and by matching the leading-order terms in the Taylor expansion,
\begin{equation}
    \mu(f)=\frac{\mu(0)}{\left(1+f G_0(0)\right)^2}.
\end{equation}
One can consider the second-order term in the principal part of Eq.~\eqref{dyn_eq} to get a dynamical equation for the residues as well, but we will not pursue this direction here. The dynamical equation \eqref{dyn_eq_pole} applies to all meromorphic correlators whether or not the thermal product formula and the spectral duality relation holds. 

Eq.~\eqref{dyn_eq_pole} has an interesting interpretation in terms of the RG flow. At the UV fixed point, the residues go to zero as we move away from the complex origin, i.e., $\lim_{n\rightarrow\infty} r_n = 0$ (see Eq.~\eqref{eq:res_convergence}). This means that a small relevant deformation most affects the poles close to the origin (the IR low-energy excitations), while leaving the UV excitations largely intact. This makes sense from the RG perspective. Moreover, we also see that an irrelevant deformation affects the UV poles much more significantly than the poles in the IR part of the spectrum, since in that case, the residues are expected to diverge with increasing $n$. This is illustrated graphically in Sections~\ref{sec:2d} and \ref{sec:4d}.

Finally, assuming the thermal product formula \eqref{eq:TPF} and using the residues \eqref{eq:residues} (see Appendix \ref{subapp:SDR} for details), we obtain a `time-independent' dynamical equation
\begin{equation}
    \frac{d\omega^\lambda_n}{d\lambda} = i\frac{\omega^\lambda_n \sinh\frac{\beta\omega^\lambda_n}{2}}{\prod\limits_{\substack{m\\ m\neq n} }\qty[1-\qty(\frac{\omega^\lambda_n}{\omega^\lambda_m})^2]}, \label{eq:dynamical_eq}
\end{equation}
where the role of time is played by $\lambda=\lambda_{12}(0,f)$, as expressed by Eq.~\eqref{eq:lambda_of_f}. This dynamical equation can be seen as the generator of spectra that obey the spectral duality relation relative to the initial (undeformed) spectrum. Of course, the system of equations remains extremely complicated due to the fact that it couples all of the (infinitely many) poles in the spectrum. 

We note that the derivation assumes isolated simple poles only, which means that the collisions among poles will likely complicate the analysis. We also remark on the possibility of subtleties related to new poles `coming in' from complex infinity. An example can be seen in Fig.~\ref{fig:worms_2d} if we adopt the perspective of an IR fixed point undergoing a double-trace deformation.

\section{Analytic example: 2$d$ CFT and the QNMs of the BTZ black hole}
\label{sec:2d}
In this Section, we consider a thermal large-central-charge CFT$_2$ dual to the BTZ black hole. Specifically, we consider scalar primary operators dual to minimally coupled scalar fields with mass $m$ in the range $-1<m^2<0$. In this case, the retarded correlators can be computed explicitly \cite{Son:2002sd} (see also Refs.~\cite{Grozdanov:2019uhi,Blake:2019otz}). For the scalar operator $\CO_+$ in the IR CFT with scaling dimension $\Delta_+$ (i.e., conventional quantisation with the Dirichlet boundary conditions), we have, up to some normalisation,
\begin{equation}
    G_+(\omega,k)=-\frac{\Gamma\qty(\frac{1+\nu}{2}+i\beta\frac{k-\omega}{4\pi })\Gamma\qty(\frac{1+\nu}{2}-i\beta\frac{k+\omega}{4\pi })}{\Gamma\qty(\frac{1-\nu}{2}+i\beta\frac{k-\omega}{4\pi })\Gamma\qty(\frac{1-\nu}{2}-i\beta\frac{k+\omega}{4\pi })}.
\end{equation}
The UV CFT correlator of the operator $\CO_-$ with dimension $\Delta_-$ is simply $G_-=-1/G_+$, which follows from Eq.~\eqref{eq:UV_IR}. Let us parameterise the scaling dimensions as
\begin{equation}
    \Delta_\pm = 1\pm \nu, \quad 0<\nu<1.
\end{equation}
The spectra of $\CO_+$ and $\CO_-$ are then given by
\begin{subequations}
\label{eq:BTZ_spectra}
\begin{align}
    \omega^+_{m,1} &= +k - \frac{2\pi i}{\beta} (2m+\nu-1),& \omega^+_{m,2} &= -k - \frac{2\pi i}{\beta} (2m+\nu-1),\\
    \omega^-_{m,1} &= +k - \frac{2\pi i}{\beta} (2m-\nu-1),& \omega^-_{m,2} &= -k - \frac{2\pi i}{\beta} (2m-\nu-1),
\end{align}
\end{subequations}
for $m=1,2,\ldots$.

This analytic example can be explicitly used to show the agreement with all of the properties of the large-$N$ correlators described in Section~\ref{sec:meromorphic}. In terms of the Christmas tree analysis of Section \ref{subsec:christmas}, we have
\begin{subequations}
\begin{gather}
    d^+_1=d^+_2=d^-_1=d^-_2=-\frac{4\pi i}{\beta}\\
    \xi^+_1=+i\frac{\beta k}{4\pi}+\frac{\nu-1}{2} ,\qquad  \xi^+_2=-i\frac{\beta k}{4\pi}+\frac{\nu-1}{2},\\
    \xi^-_1=+i\frac{\beta k}{4\pi}-\frac{\nu+1}{2},  \qquad\xi^-_2=-i\frac{\beta k}{4\pi}-\frac{\nu+1}{2}.
\end{gather}
\end{subequations}
The offset indices introduced in Eq.~\eqref{eq:displacement} are
\begin{equation}
    \sigma_1=\sigma_2=-\nu,
\end{equation}
which gives (see Eq.~\eqref{eq:G_asymptotics})
\begin{equation}
    \abs{\omega}\rightarrow \infty: \quad  G_+(\omega)\sim \omega^{2\nu}, \quad \arg\omega \neq - \frac{\pi}{2}.
\end{equation}
In this case, we can analytically compute the asymptotic behaviour of the residues,
\begin{equation}
    r^+_{m,*}\sim m^{2\nu}, \quad r^-_{m,*} \sim m^{-2\nu},
\end{equation}
where $*=\qty{1,2}$. This means that the partial fractions decomposition of Eqs.~\eqref{eq:G_partial_decomp} converges. The product decomposition of Eq.~\eqref{eq:G_product_decomp} can be  verified via the Weierstrass decomposition of the gamma function. The function $S(\omega)$, as defined in Eq.~\eqref{def:S}, can be expressed as
\begin{equation}
    S(\omega)=\frac{\cos\qty(\pi\nu-\frac{i\beta\omega}{2})+\cosh \frac{\beta k}{2}}{\cos\pi\nu+\cosh \frac{\beta k}{2}}.
\end{equation}
This explicitly verifies the spectral duality relation \eqref{eq:SDR} with
\begin{equation}
    \lambda=\frac{\sin\pi\nu}{\cos\pi\nu+\cosh\frac{\beta k}{2}},
\end{equation}
which, as required, only depends on $k$ and $\nu$, but not on $\omega$.

Finally, we also use this opportunity to point out a few new interesting facts about the dependence of the BTZ spectrum on the double-trace coupling constant $f$. We find that the relatively featureless spectrum \eqref{eq:BTZ_spectra} at the conformal fixed points (all of its QNMs are evenly spaced, and are linear functions of $k$ without any level-crossing \cite{Grozdanov:2019uhi}) exhibits much more interesting behaviour when subjected to double-trace deformations. To show this, we study the solutions to the equation
\begin{equation}
    G_-(\omega_n^{f_-})=-\frac{1}{f_-} .\label{eq:BTZ_trace}
\end{equation}
Fig.~\ref{fig:worms_2d} depicts the running of the spectrum from $\{\omega_n^-\}$ at $f_-=0$ to $\{\omega_n^+\}$ at $f_-=\infty$. The coupling constant $f_-$ must be positive to guarantee the stability condition (see Eq.~\eqref{eq:stability}). Note that an equivalent deformation of the IR fixed point would require a negative coupling constant (see Eq.~\eqref{eq:couplings}). Interestingly, as a function of $f_-$, some poles move up and some down in the complex plane. Another interesting property of the flows is that a mirrored pair of QNMs may also escape to complex infinity. This behaviour is highly sensitive to the choice of the scaling dimension, which dictates which pair of mirrored poles will escape to infinity (see the right panel of Fig.~\ref{fig:worms_2d}). For the scaling dimensions with $\nu\gtrsim 0.64$, the pair of poles that escapes to infinity is the (lowest-lying, lowest-energy) pair that is closest to the complex origin. 

\begin{figure}[ht!]
    \centering
    \includegraphics[width=0.47\linewidth]{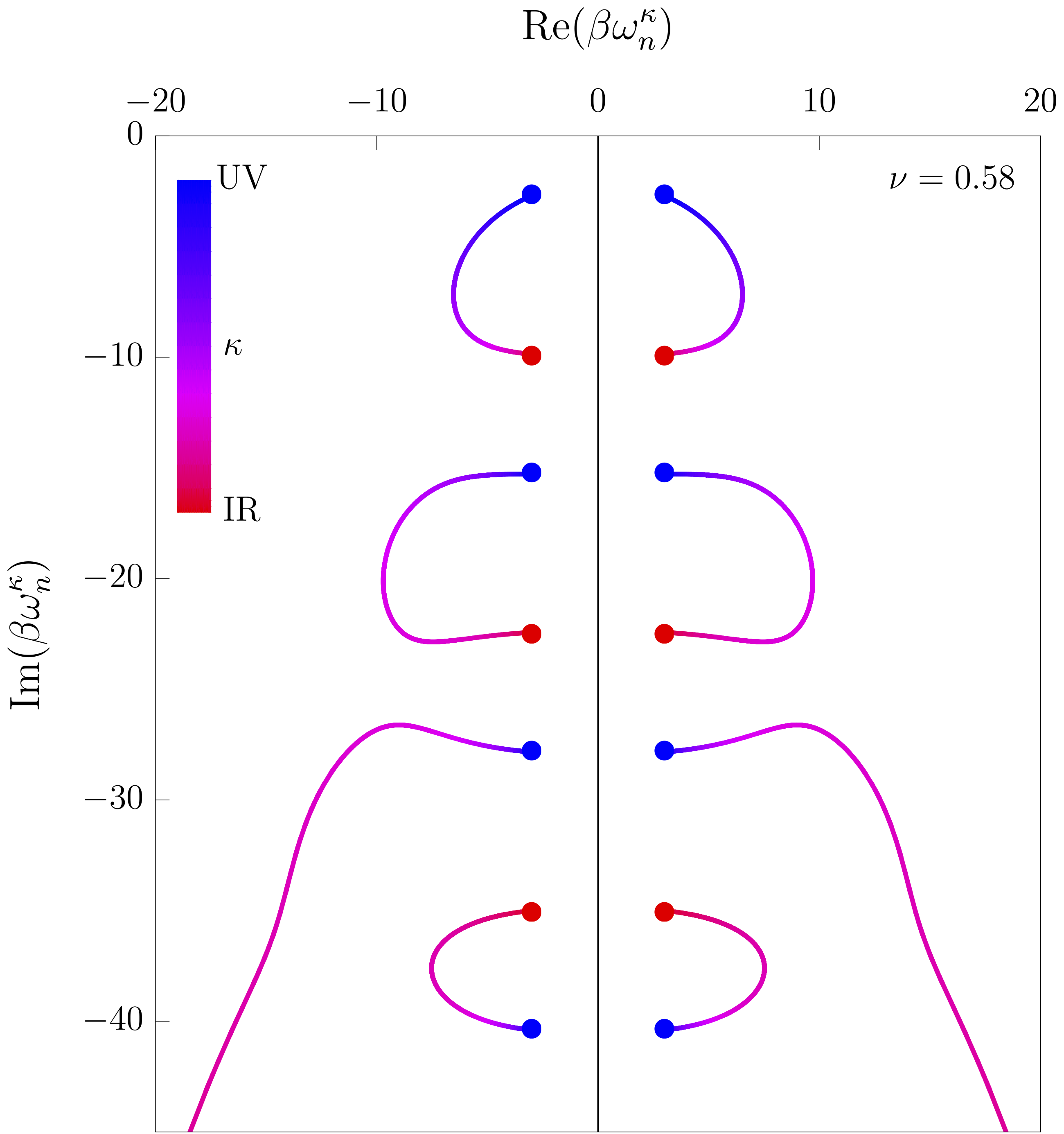}
    \includegraphics[width=0.52\linewidth]{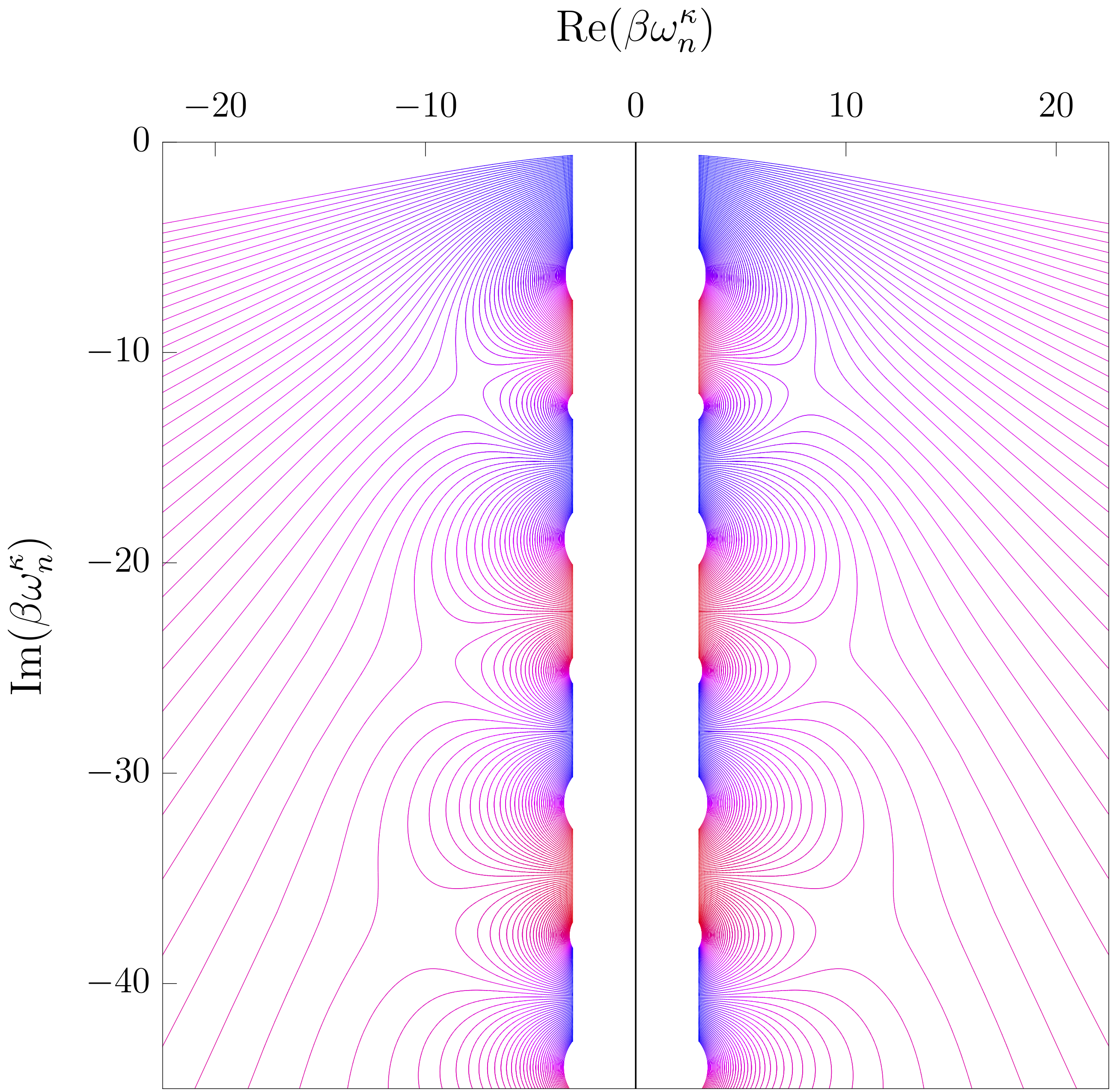}
    \caption{{\bf Left panel:} Spectrum of the retarded scalar correlator deformed by an arbitrary double-trace coupling, expressed as the full set of solutions to Eq.~\eqref{eq:BTZ_trace} at $k=3/\beta$ and $\nu=0.58$. We define $\kappa=\frac{f_--1}{f_-+1}$, so that $\kappa=-1$ at the UV and $\kappa=1$ at the IR fixed point. Along the RG flow, some poles move downwards, while some move upwards. One pair of poles `escapes to infinity'. This behaviour is highly sensitive to the scaling dimension of the scalar operator. {\bf Right panel:} Double-trace RG flows of the same scalar operators plotted at $k=3/\beta$ for different scaling dimensions with $\nu=\qty{0.2,0.21,0.22,\ldots,0.9}$. For every scaling dimension, there is a pair of poles that `escapes to infinity'. The `choice' of the pair depends sensitively on the scaling dimension.}
    \label{fig:worms_2d}
\end{figure}

\section{Numerical example: 4$d$ $\CN=4$ supersymmetric Yang-Mills theory}
\label{sec:4d}
In this Section, we consider the thermal spectra of scalar primaries in the neutral thermal state of the 4$d$ $\CN=4$ Super-Yang Mills theory at infinite 't Hooft coupling and the $N\to\infty$ limit. The spectra correspond to the QNMs of the minimally coupled massive scalar fields in the background of 5$d$ Schwarzschild-anti-de Sitter black brane (see e.g. Refs.~\cite{Kovtun_2005,Son:2002sd}).\footnote{In this Section, the spectra were computed numerically using the \texttt{QNMSpectral} library \cite{Jansen:2017oag}.} 

The goal of this Section is to verify the spectral duality relation in this, significantly more complicated setting than in a 2$d$ CFT, and present a numerical method for computing the spectrum at any given value of the double-trace coupling $f$ from the spectrum at a fixed point. This method should be understood as a demonstration of the results presented in this work. We expect that alternative approaches to the reconstructions of spectra using the spectral duality relation exist that are both more computationally efficient and physically insightful.

Let us now numerically analyse the spectrum of the operator $\CO_-$ ($\CO_+$) with scaling dimension $\Delta_-$ ($\Delta_+$) at the UV (IR) fixed point, denoted by $\qty{\omega_n^-}$ ($\qty{\omega_n^+}$). The spectrum corresponds to the QNMs of a bulk scalar field with Neumann (Dirichlet) boundary conditions. In the context of Section~\ref{subsec:christmas}, the spectra are organised into two branches with
\begin{subequations}
\label{num:asymp}
    \begin{align}
    \omega^\pm_{m,1}&=+d(m+\xi^\pm) +\delta\omega^\pm_{m}, \\ 
    \omega^\pm_{m,2}&=-(\omega^\pm_{m,1})^*,
\label{eq:sym_qnm}
\end{align}
\end{subequations}
where $\delta\omega^\pm_m$ goes to zero as $m$ goes to infinity. We pick $\Re d<0$. The constraint equations \eqref{eq:christmas_constraint} and \eqref{eq:xmas_constraint} translate into
\begin{equation}
\label{num:constraints}
    \Im\frac{1}{d}=\frac{\beta}{4\pi}, \quad \Re \xi^\pm=\frac{\pm\nu-1}{2},
\end{equation}
where we have introduced $\nu$ as
\begin{equation}
    \Delta_\pm=2\pm \nu, \quad 0<\nu<1.
\end{equation}
Eqs.~\eqref{eq:displacement}, \eqref{def:sigma} and \eqref{eq:xmas_constraint} can be used to determine the large-$\omega$ behaviour of the correlators with $\sigma = -2\nu$,
\begin{equation}
    G_\pm(\omega) \sim \omega^{\pm 2 \nu}.
\end{equation}
This can be also seen from Eq.~\eqref{eq:UV_IR} and the fact that
\begin{equation}
    \rho_\pm(\omega) \sim \omega^{\pm 2\nu}.
\end{equation}

Suppose now that we have the (numerical) knowledge of the first $m_\text{max}$ UV QNMs (with Neumann boundary conditions in the bulk), not counting their mirror images. We wish to approximate the spectrum of the correlators at the IR fixed point $\{\omega_n^+\}$, and the spectrum at some finite double-trace coupling $f$. To do so, we will compute $G_-(\omega)$ using the thermal product formula \eqref{eq:TPF} in conjunction with the partial fractions decomposition \eqref{eq:G_partial_decomp}.

Let $\omega^-_{m,1}$ correspond to the numerically computed poles in the lower-left quadrant of the complex plane, in the order of increasing moduli. We can then approximate\footnote{Note that in many cases, one can extract the parameters $d$ and $\xi$ using the WKB methods \cite{Natario:2004jd,Cardoso:2004up,Konoplya:2003ii,Cardoso:2003cj}. We do not rely on such data in this work, however, see our Ref.~\cite{Grozdanov:2025ner} for a similar computation that uses the WKB input.}
\begin{subequations}
\begin{align}
    \tilde{d} &= \qty(\omega_{m_{\text{max}},1}-\omega_{m_{\text{max}}-1,1})e^{i\theta},\\
    \tilde \xi^- &= \qty(\frac{\omega^-_{m_\text{max},1}}{\tilde d}-m_\text{max})e^{i\varphi}.
\end{align}    
\end{subequations}
Here, $\tilde d$ and $\tilde \xi$ denote the numerical approximations to $d$ and $\xi$. The phases $\theta$ and $\varphi$ are small corrections that should be set so that the constraints~\eqref{num:constraints} hold. It turns out that such corrections massively improve the approximation.

We can now use the asymptotic expansion \eqref{num:asymp} to approximate the entire spectrum with $\omega^-_{m,1}\approx\widetilde \omega^-_{m,1}$:
\begin{equation}
\label{num:approximate_spectrum}
        \widetilde\omega^-_{m,1}=\begin{cases}
            \omega^-_{m,1}, &m\leq m_\text{max},\\
            \tilde d (m +\tilde \xi^-), &m > m_\text{max}.
        \end{cases}
\end{equation}
Using the thermal product formula, we can express the approximate spectral density function as
\begin{equation}
    \rho_-(\omega;m_\text{max})=\frac{H(\omega;m_\text{max})}{H(0;m_\text{max})}\frac{\mu_-\sinh\frac{\beta\omega}{2}}{\pi^-(\omega;{m_\text{max})}},
\end{equation}
which consists of the known part of the spectrum
\begin{equation}
    \pi^-(\omega;m_\text{max})=\prod_{m=1}^{m_\text{max}}\qty[1-\qty(\frac{\omega}{\omega^-_{m,1}})^2]\qty[1-\qty(\frac{\omega}{\omega^-_{m,2}})^2],
\end{equation}
and the extrapolated part of the spectrum
\begin{equation}
    H(\omega;m_\text{max})=h(\omega;m_\text{max})h(-\omega;m_\text{max}),
\end{equation}
where we take
\begin{equation}
    h(\omega;m_\text{max})=\Gamma\qty[\frac{\omega}{\tilde d}+m_\text{max}+\tilde{\xi}^-+1]\Gamma\qty[\frac{\omega}{\tilde d^*}+m_\text{max}+(\tilde{\xi}^-)^*+1].
\end{equation}
This allows for the explicit computation of the approximate (rescaled) residues $\tilde r^-_{m,1}\approx r^-_{m,1}\mu_-$:
\begin{align}
\label{num:approximate_residues}
    m&\leq m_\text{max}:&  \tilde r^-_{m,1}&=
        - \frac{x\sinh\frac{\beta x}{2}}{2\qty[1-\qty(\frac{x}{x^*})^2]\pi_m^-(x;m_\text{max})}\frac{H(x;m_\text{max})}{H(0;m_\text{max})}\biggr|_{x=\widetilde \omega^-_{m,1}}, \\
    m&>m_\text{max}:& \tilde r^-_{m,1}&=\frac{h(x;m_\text{max})}{H(0;m_\text{max})}\frac{\sinh\frac{\beta x}{2}}{\pi^-(x;m_\text{max})}\frac{(-1)^{m+m_\text{max}}d}{(m-m_\text{max}-1)!}\nonumber \\
    & & &\times \left.\Gamma\qty[-\frac{x}{\tilde d^*}+m_\text{max}+(\tilde{\xi}^-)^*+1]\right|_{x=\widetilde \omega^-_{m,1}},\nonumber
\end{align}
where
\begin{equation}
    \pi_j^-(\omega;m_\text{max})=\prod_{\substack{m=1\\m\neq j}}^{m_\text{max}}\qty[1-\qty(\frac{\omega}{\omega^-_{m,1}})^2]\qty[1-\qty(\frac{\omega}{\omega^-_{m,2}})^2].
\end{equation}
The residues of the mirrored branch are simply
\begin{equation}
    \tilde r^-_{m,2}=(\tilde r^-_{m,1})^*.
\end{equation}
It can be shown that the large-$m$ asymptotics of the residues behave as
\begin{equation}
    m\rightarrow\infty: \quad \tilde r^-_{m,1} \sim m^{-2\nu}.
\end{equation}
This means that we can express the full retarded correlator $G_-(\omega)$ in terms of the partial fractions decomposition (see Eq.~\eqref{eq:G_partial_decomp})
\begin{equation}
\label{num:gUV}
    G_-(\omega)\approx  2i \mu_-\sum_{m=1}^M \frac{\tilde r^-_{m,1}}{\omega-\widetilde \omega_{m,1}^-}+\frac{\tilde r^-_{m,2}}{\omega-\widetilde \omega_{m,2}^-}.
\end{equation}
Here, $M$ can be chosen to be arbitrarily large. The IR spectrum can now be numerically approximated by solving $G_-(\omega_n^+)=0$, which in this procedure still depends only on the known part of the UV spectrum we initially assumed. While the computation of the spectrum at an arbitrary coupling $f^-$ requires the knowledge of $\mu_-$, this can be circumvented by absorbing $\mu_-$ into the double-trace coupling,
\begin{equation}
\label{num:swipe_UV}
    \left. 2i \sum_{m=1}^M\frac{\tilde r^-_{m,1}}{\omega-\widetilde \omega_{m,1}^-}+\frac{\tilde r^-_{m,2}}{\omega-\widetilde \omega_{m,2}^-}\right|_{\omega=\omega_n^{\mu_-f_-}} = -\frac{1}{\mu_-f_-},
\end{equation}
which gives the approximation for the double-trace deformed spectrum $\{\omega_n^{\mu_-f_-}\}$. To ensure stability, $f_-$ must be positive.

An (almost) identical procedure also allows us to approximate the UV spectrum from some finite known part of the IR spectrum. The computation of the approximate spectrum \eqref{num:approximate_spectrum} and the residues \eqref{num:approximate_residues} is completely analogous, with $-\rightarrow +$ and $\nu\rightarrow -\nu$. This gives the large-$m$ asymptotics of the residues
\begin{equation}
    \tilde r^+_{m,1} \sim m^{2\nu}.
\end{equation}
This means that we must select $p\geq 2$ in the partial fractions decomposition  \eqref{eq:G_partial_decomp}. Choosing $p=2$, this gives
\begin{equation}
\label{num:gIR}
    G_+(\omega) \approx \mu_+\qty[-\frac{1}{\lambda_+}+i\frac{\beta\omega}{2}+2i \omega^2\sum_{m=1}^M \frac{\tilde r^-_{m,1}/(\widetilde\omega_{m,1}^-)^2}{\omega-\widetilde \omega_{m,1}^-}+\frac{\tilde r^-_{m,2}/(\widetilde\omega_{m,2}^-)^2}{\omega-\widetilde \omega_{m,2}^-}].
\end{equation}
Importantly, here, we are left with an undetermined constant $\lambda_+$. Although it is likely that it is possible to determine $\lambda_+$ by other means, here, we determine it by assuming the knowledge of one pole (QNM) from the UV spectrum through the equation $G_+(\omega_n^-)=0$. It is interesting to note that there is no such ambiguity at the UV fixed point. This aligns with the commonly understood fact that in the RG flows from UV fixed point to the IR fixed point, no extra information is necessary to reproduce the IR physics from the UV physics. Similarly to the case above, we can now determine the spectra at an arbitrary (rescaled) value of the double-trace coupling through
\begin{equation}
\label{num:swipe_IR}
    \left.\qty[-\frac{1}{\lambda_+}+i\frac{\beta\omega}{2}+2i \omega^2\sum_{m=1}^M \frac{\tilde r^-_{m,1}/(\widetilde\omega_{m,1}^-)^2}{\omega-\widetilde \omega_{m,1}^-}+\frac{\tilde r^-_{m,2}/(\widetilde\omega_{m,2}^-)^2}{\omega-\widetilde \omega_{m,2}^-}]\right|_{\omega=\omega_{n}^{\mu_+f_+}}=-\frac{1}{\mu_+ f_+}.
\end{equation}
Here, $f_-$ must be negative for stability. The spectra $\{\omega_n^{\mu_+f_+}\}$ and $\{\omega_n^{\mu_-f_-}\}$ coincide when
\begin{equation}
    (f_- \mu_-)(f_+ \mu_+)=-\lambda_+.
\end{equation}

We explicitly analyse the spectra for $\nu=1/3$ and $\nu=2/3$ for $k=0$ and $k=3/\beta$. The UV and the IR spectra are computed by solving the bulk wave equation numerically with Neumann and Dirichlet boundary conditions, respectively. Eqs.~\eqref{num:swipe_UV} and \eqref{num:swipe_IR} are then used to compute the double-trace deformed spectra. By using $m_\text{max}=20$ and $M=50$, the accuracy of the first UV (IR) poles computed from the IR (UV) spectrum, compared to the ones computed from the bulk wave equation (with precision of approximately $130$ digits), falls within $2\%$. Fig.~\ref{fig:4d_worms} shows the dependence of the spectrum on the double-trace coupling. Importantly, the numerical computation correctly predicts the relative offset between the lines of poles $\sigma=-2\nu$ (see also Fig.~\ref{fig:illustration}). Note that, unlike in the BTZ case discussed in Section~\ref{sec:2d}, we do not notice the qualitative change in the motion of the poles as we changed the scaling dimension, at least for the scaling dimensions considered.

\begin{figure}
    \centering
    \includegraphics[width=0.8\linewidth]{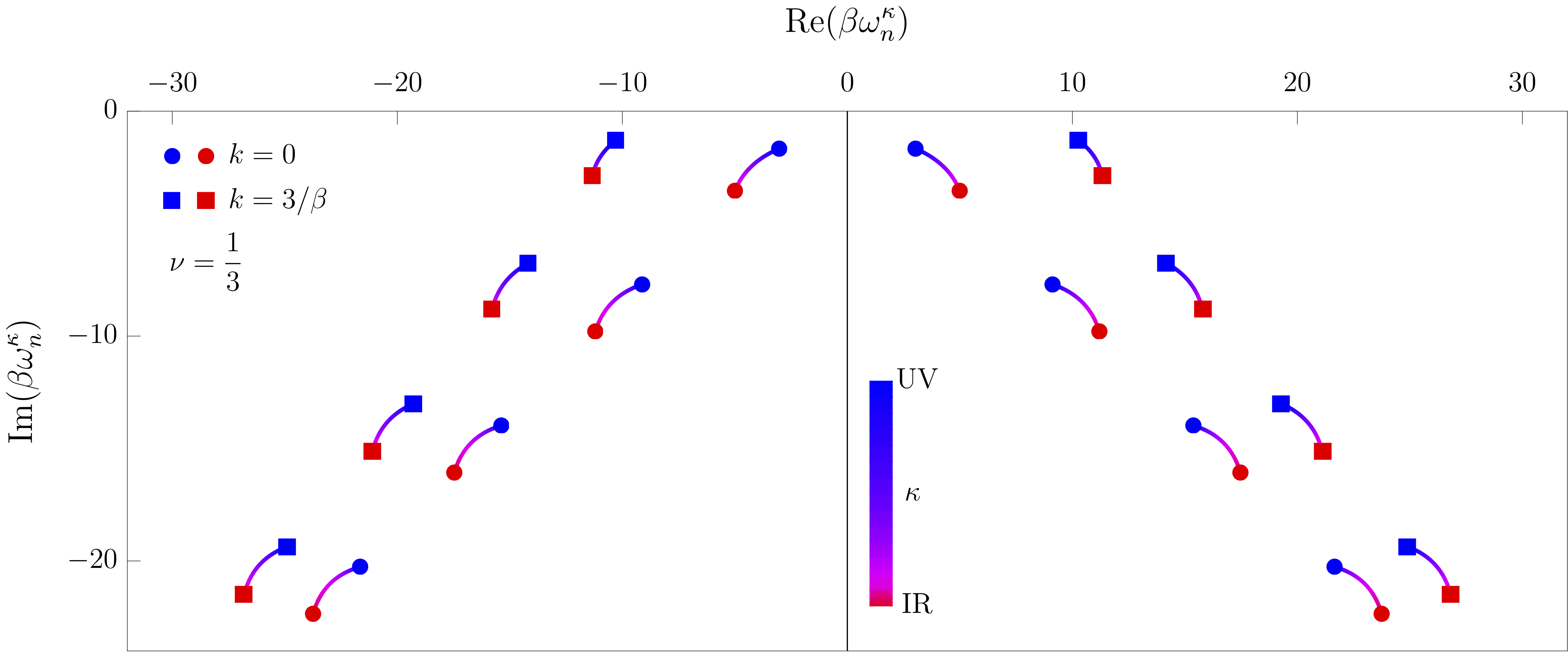}
    \includegraphics[width=0.8\linewidth]{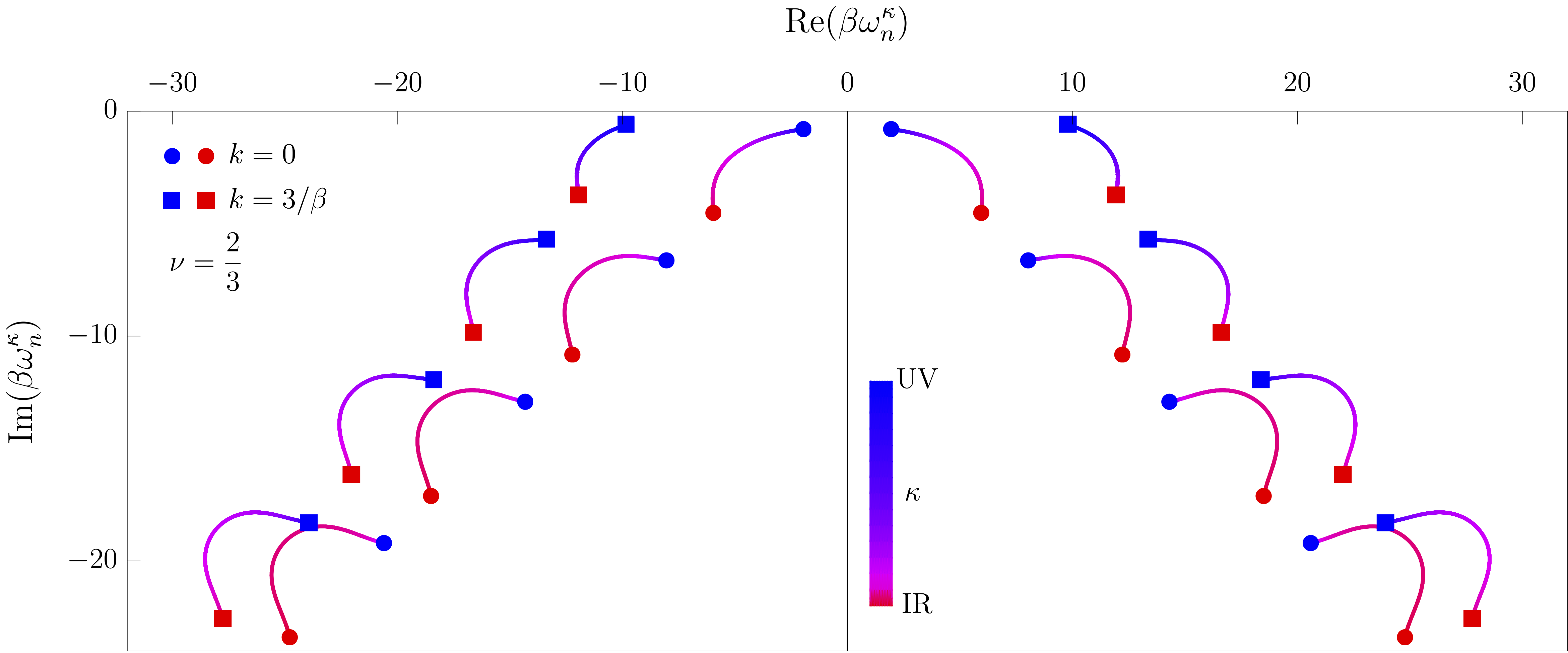}
    \caption[]{The double-trace deformed flows of the scalar spectrum between the UV to IR fixed points shown for $\nu=1/3$ (above) and $\nu=2/3$ (below). Both flows are shown for two values of momentum. Here, $\kappa$ is a rescaled coupling constant
    $\kappa=\frac{f_- G_-(0)-1}{f_- G_-(0)+1}=\frac{1-f_+ G_+(0)}{1+f_+ G_+(0)}$, such that $\kappa=-1$ at the UV and $\kappa=1$ at the IR fixed point. The spectra at all couplings were computed through the knowledge of the IR spectrum (see Eq.~\eqref{num:swipe_IR}), with the constant $\lambda_+$ set by the lowest-lying UV-fixed-point pole (i.e., by setting $G_+(\omega_{1,1}^-)=0$). The exception is the zero-momentum spectrum with $\nu=2/3$, where the spectra at various $\kappa$ were computed only with the knowledge of UV spectrum (see Eq.~\eqref{num:swipe_UV}), as that proved to be most numerically accurate. The fixed point spectra (circles and squares) were computed numerically by solving the bulk wave equation with Neumann or Dirichlet boundary conditions. We note that the trajectories that are closer to the complex origin are `more blue', meaning that they are affected more by a small deformation of the UV fixed point. As we increase the coupling, the low-lying poles then relax to their IR positions much sooner than the higher-energy QNMs.}
    \label{fig:4d_worms}
\end{figure}

We now turn to the computation of $\lambda=\lambda_+$, which appears in the spectral duality relation \eqref{eq:SDR}. It can be computed from the partial fractions decomposition, either from the UV spectrum, or from the IR spectrum and a single UV pole (see Eqs.~\eqref{def:lambda}, \eqref{eq:lambda} and \eqref{eq:G_partial_decomp}). Alternatively, it can be computed by using the elementary symmetric polynomial expressions from Eq.~\eqref{eq:lambda_e}, in terms of both the UV and IR spectra. The convergence for all methods is shown for the cases with $\nu=1/3$ and $k=0$ in Fig.~\ref{fig:lambdas}.

\begin{figure}
    \centering
    \includegraphics[width=0.7\linewidth]{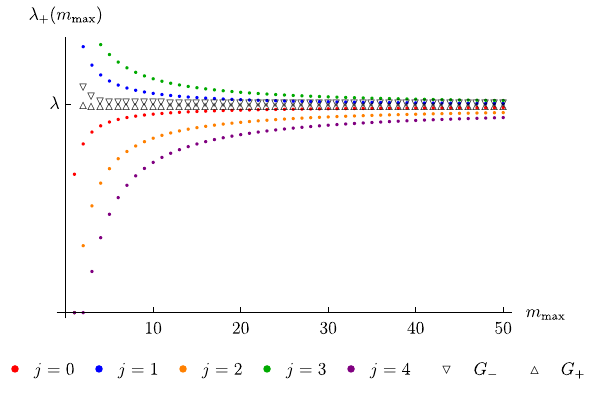}
    \caption{The parameter $\lambda=\lambda_+\approx 0.27$ computed with elementary symmetric polynomials for various values of $j$ (see Eq.~\eqref{eq:lambda_e}). This is compared with the same parameter as computed through $\lambda=-2i G_-'(0)/\beta G_-(0)$ (downwards triangles), or $\lambda=2 i G_+'(0)/\beta G_+(0)$ (upwards triangles), where $G_-(\omega)$ and $G_+(\omega)$ are expressed through Eqs.~\eqref{num:gUV} and \eqref{num:gIR} respectively, with $M=50$. The former is equivalent to Eq.~\eqref{eq:lambda}, while the latter is determined from the IR spectrum and the lowest-lying UV pole through $G_+(\omega_{1,1}^-)=0$.}
    \label{fig:lambdas}
\end{figure}

Finally, we verify the spectral duality relation \eqref{eq:SDR}. Let us define the partial product
\begin{equation}
\label{num:S}
    S(\omega,{m_\text{max}})=\prod_{m=1}^{m_\text{max}}\qty(1-\frac{\omega}{\omega^+_{m,1}})\qty(1-\frac{\omega}{\omega^+_{m,2}})\qty(1+\frac{\omega}{\omega^-_{m,1}})\qty(1+\frac{\omega}{\omega^-_{m,2}}).
\end{equation}
The odd part of $S(i\abs{\omega})-S(-i\abs{\omega})$ then converges to the sine function, as is shown in Fig.~\ref{fig:S}. We note that the convergence of the spectral duality relation is very slow, and is not improved by an extrapolation of the spectrum in the sense of Eq.~\eqref{num:approximate_spectrum}. Moreover, the convergence is much slower than in the case of the energy-momentum correlator studied in Ref.~\cite{Grozdanov:2024wgo,Grozdanov:2025ner}, which may possibly be attributed to the scalar's non-integer scaling dimension, which gives a non-integer offset index $\sigma$, meaning that the UV and the IR poles are further away from one another asymptotically.
\begin{figure}
    \centering
    \includegraphics[width=0.7\linewidth]{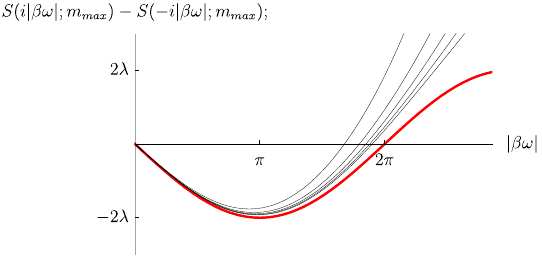}
    \caption{The convergence of the spectral duality relation \eqref{eq:SDR} with the UV and the IR spectra computed directly from the bulk. The odd part of the truncated product \eqref{num:S} is shown for $m_\text{max}=\qty{10,20,30,40,50}$. The function $-2\lambda \sin \beta\abs{\omega}/2$ is shown in red.}
    \label{fig:S}
\end{figure}

We conclude this Section by several comments about this numerical procedure. Firstly, we note that the convergence properties worsen as we increase momentum. This is expected, since the spectrum moves farther away from the `ideal' Christmas tree configuration. Secondly, picking a larger value of $p$ in the partial fractions expansion \eqref{eq:G_partial_decomp} greatly improves convergence properties, at the cost of having to supply more Taylor series coefficients of $G(\omega)$ around $\omega=0$. It is likely that a more efficient and accurate method to determine these coefficients exists beyond the one used in this work. Importantly, other numerical approaches to the problem of determining the deformed spectra are possible, and were left uninvestigated in this work. This includes a direct algebraic manipulation of the spectral duality relation \eqref{eq:SDR}, or a numerical integration of the dynamical equation \eqref{eq:dynamical_eq}. We expect that further CFT input, such as for example the OPE data, would improve the convergence of the methods presented and examined here.

\section{Beyond the scalar operators: conserved currents in CFT$_3$}
\label{sec:3d}
In this Section, we briefly comment on the generalisation of our discussion to operators with spin. For simplicity, we focus on CFTs in three spacetime dimensions with a conserved U$(1)$ current. Such CFTs are characterised by the presence of the particle-vortex duality (or S-duality), which, in the large-$N$ limit, gives rise to the spectral duality relation \cite{Grozdanov:2024wgo}. Such dualities have been explored in a variety of applications, most notably in the study of quantum critical points in condensed matter physics (see e.g.~Refs.~\cite{Son:2015xqa,Karch_2016,Seiberg_2016,Kovtun_2005,Senthil:2018cru,witten2003sl2zactionthreedimensionalconformal}). Here, we briefly discuss how the spectral duality relation arises in the cases where the assumptions of Section \ref{subsec:analytic_properties} are satisfied, we comment on the special self-dual limit of the relations and comment on Maxwell-like (`double-trace') deformations of the theory. While we do not present numerical results, we note that some of the formulae discussed here have been numerically verified in Refs.~\cite{Grozdanov:2024wgo,Grozdanov:2025ner}.

\subsection{Current-current correlators in 3$d$}
Consider a conserved U$(1)$ current $J^\mu$, sourced with an external gauge field $A_\mu$.  In linear response theory, with $G$ the associated retarded correlator of the currents, we can write
\begin{equation}
    \expval{J^\mu(\omega,k)}_A=G^{\mu\nu}(\omega,k) A_\nu(\omega,k).
\end{equation}
We assume a homogeneous, isotropic and neutral thermal state of a parity and time-reversal symmetric theory. Without loss of generality, we pick the momentum to lie in the $z$ direction, so that $k_\mu dx^\mu=-i\omega\, dt+i k\, dz$ in the $(t,y,z)$ coordinate system. Because the current is conserved, the following momentum space Ward identity must hold:
\begin{equation}
    k_\mu G^{\mu\nu}=0.
\end{equation}
We can then decompose the correlator into its longitudinal ($+$) and transverse ($-$) parts (see e.g.~Ref.~\cite{Kovtun_2005}):
\begin{equation}
    G^{\mu\nu}=G_+ P^{\mu\nu}_+ + G_- P^{\mu\nu}_-,
\end{equation}
where the two projectors are given by
\begin{equation}
    P_+^{\mu\nu} = \frac{1}{2\pi}\mqty(k^2 & 0 & k\omega \\ 0 & 0 & 0 \\ k\omega & 0 & \omega^2),\quad
    P_-^{\mu\nu} = \frac{1}{2\pi}\mqty(0 & 0 & 0\\ 0&1&0\\ 0 &0 &0).
\end{equation}

\subsection{Particle-vortex duality}
Suppose there exists a CFT$_A$ with a conserved U$(1)$ current $J^\mu_A$ sourced by an external gauge field $A$ and a generating functional $\Gamma_A[A]$. We demand that such a current is gauge invariant, i.e., rather than adding an $A_\mu J_A^\mu$ term to the action, we must gauge the derivatives. This affects the contact terms of the correlators (see e.g.~Refs.~\cite{Hartnoll:2007ip,Katz:2014rla} and Appendix~\ref{subapp:contact}). We consider a neutral state, i.e., a state with zero charge density and zero magnetic field. The generating functional then schematically admits an expansion
\begin{equation}
    \Gamma_A[A] = \int  \frac{1}{2} G_A^{\mu\nu}A_\mu A_\nu.
\end{equation}

The particle-vortex dual of the theory $A$ is derived by performing a path integral over $A$, which gives a theory with a conserved `topological' U$(1)$ current \cite{witten2003sl2zactionthreedimensionalconformal,Seiberg_2016,Leigh_2003,Karch_2016,Burgess_2001,Alejo_2019,Senthil:2018cru}
\begin{equation}
\label{def:top_current}
    J_B^\mu = \frac{1}{2\pi}\epsilon^{\mu\rho\sigma}\partial_\rho A_\sigma.
\end{equation}
This current is sourced by an external gauge field $B_\mu$
\begin{equation}
\label{def:PV}
    e^{\Gamma_B[B]}=\int \CD A e^{\Gamma_A[A]+\frac{1}{2\pi}\epsilon^{\mu\rho\sigma}\int B_\mu \partial_\rho A_\sigma},
\end{equation}
where
\begin{equation}
    \Gamma_B[B]= \int \frac{1}{2}G^{\mu\nu}_B B_\mu B_\nu.
\end{equation}
The integral in Eq.~\eqref{def:PV} precisely parallels the discussion of Section~\ref{subsec:UV-IR-CFT}. Whenever the saddle point approximation is justified (see Ref.~\cite{Marolf_2006} for a discussion in the context of holography, where this transformation swaps between magnetic and electric boundary conditions), this gives a set of duality relations between the CFT$_A$ and CFT$_B$:
\begin{subequations}
\begin{align}
    G^A_-(\omega) G^B_+(\omega) &= -1\\
    G^A_+(\omega) G^B_-(\omega) &= -1.
\end{align}
\end{subequations}
Consequently, a spectral duality relation exists between the longitudinal (transverse) spectrum of CFT$_A$ and the transverse (longitudinal) spectrum of CFT$_B$.

\subsection{Double-trace deformed CFTs and dynamical Maxwell fields}
Consider now a CFT$_A$ and a CFT$_B$, which are related by the particle-vortex duality (see Eq.~\eqref{def:PV}). Suppose we let the external field $A$ in CFT$_A$ be dynamical with explicit Maxwell dynamics,\footnote{See Refs.~\cite{Marolf_2006,Leigh_2003} for a related discussion in the context of holography, and Refs.~\cite{Grozdanov:2017kyl,Hofman:2017vwr} for similar implementations of dynamical electromagnetism in the 4$d$ boundary theory (see also Refs.~\cite{DeWolfe:2020uzb,Frangi:2024enh}).}
\begin{equation}
    \Gamma_A[A]\rightarrow \Gamma_A[A]-\frac{f}{4(2\pi)^2}\int F_{\mu\nu}F^{\mu\nu},
\end{equation}
where $F_{\mu\nu}=\partial_\mu A_\nu-\partial_\nu A_\mu$. This can be interpreted as a double-trace deformation of the CFT$_B$, 
\begin{equation}
    \int \CD A e^{\Gamma_A[A]-\frac{f}{4(2\pi)^2}\int F_{\mu\nu}F^{\mu\nu}+\frac{1}{2\pi}\epsilon^{\mu\nu\sigma}\int B_\mu \partial_\nu A_\sigma} = e^{\Gamma_B[B,f]},
\end{equation}
where $\Gamma[B;f]$ is given by
\begin{equation}
    e^{\Gamma_B[B;f]}=\expval{e^{-\frac{f}{2}\int J^\mu_B J^\nu_B \eta_{\mu\nu}+\int J_B^\mu B_\mu}}_B.
\end{equation}
The current $J_B^\mu$ is defined in Eq.~\eqref{def:top_current}, and the identity
\begin{equation}
    \frac{1}{2(2\pi)^2}F_{\mu\nu}F^{\mu\nu}=J^\mu_B J^\nu_B \eta_{\mu\nu}
\end{equation}
holds. This gives the following set of identities between the retarded correlators of CFT$_A$ and the deformed CFT$_B$:
\begin{subequations}
\begin{align}
    G^{B}_-(\omega;f) \qty(G_+^A(\omega;0)+\frac{f}{8\pi})&=-1, \label{maxwell1}\\
    G^{B}_+(\omega;f) \qty(G_-^A(\omega;0)+\frac{f}{8\pi}(\omega^2-k^2))&=-1. \label{maxwell2}
\end{align}
\end{subequations}
Since the correlators are modified by even-in-$\omega$ contact terms, the discussion of Section~\ref{sec:meromorphic} applies. This, again, gives a set of spectral duality relations that relate the spectra between the two theories.

\section{Conclusion and future directions}
\label{sec:conclusion}
In this work, we discussed the complex analytic properties of retarded correlation functions in large-$N$ quantum field theories (QFTs) at finite temperature, with most of our attention focused on conformal field theories (CFTs). We studied the properties of retarded correlators, which were assumed to be meromorphic functions in the complex $\omega$ plane. This property is ubiquitous in thermal QFTs with holographic duals where, via the holographic dictionary, the poles of the boundary correlators correspond to the spectrum of the quasinormal modes (QNMs) in the dual bulk. One important property on which some of our analysis depended was the thermal product formula (or the thermal product hypothesis) of Ref.~\cite{Dodelson:2023vrw}. Owing to the special asymptotic properties of such meromorphic correlators, we were able to rigorously establish a number of statements on the distributions and densities of zeroes and poles of the large-$N$ two-point correlators. Moreover, we discussed in detail the spectral duality relation originally derived in our Ref.~\cite{Grozdanov:2024wgo} and further expanded upon in Ref.~\cite{Grozdanov:2025ner}. We showed that this duality relation between pairs of spectra can be extended and applied to theories and correlators beyond those of the S-dual 3$d$ CFTs, in particular, to correlators in theories in any number of dimensions that are related by the Legendre transform. A particularly interesting example are the UV and IR CFTs related by the double-trace RG flows. In the process, we studied a number of examples that demonstrate the validity of our assumptions on the properties of correlators and the spectral duality relation. Furthermore, we developed a simple numerical algorithm that could be implemented to directly compute the spectrum of poles (and zeroes) from the spectrum of another, dual retarded correlator. An example of such a construction allowed us to compute the poles and zeroes of a retarded scalar primary two-point correlator in the IR CFT directly from only the spectrum of poles of the scalar correlator in the UV CFT.  

While most of the assumptions made in Section~\ref{subsec:analytic_properties} that underlie all of this paper are expected to hold quite generally, the thermal product hypothesis might need to be adapted in situations with non-vanishing equilibrium expectation values of the operators. In holographic systems, this usually implies a non-trivial coupling between the equations of motion of various fields resulting in a system of differential equations more complicated than a single bulk wave equation. This scenario was recently investigated in Ref.~\cite{Bhattacharya:2025vyi}, where a modification of the thermal product formula was proposed. It would be interesting to investigate a similar modification to the spectral duality relation. Another important generalisation of the results discussed here pertains to the study of large-$N$ correlators that might exhibit a branch cut, such as the one computed in Ref.~\cite{Romatschke:2019ybu}. Of course, a more thorough and rigorous analysis of correlators functions and spectra with poles, as well as branch cuts also remains an important open problem in QFTs and CFTs beyond the large-$N$ limit, as does the search for explicit analytically tractable examples of such correlators. 

We note that another promising direction for further exploration would be to use more detailed OPE data to compute corrections to the Christmas tree asymptotics of both the UV and IR spectra (see Ref.~\cite{Dodelson:2023vrw}), and use that in conjunction with the spectral duality relation to constrain low-energy physics. Furthermore, the formalism presented in this paper should also be used alongside the thermal bootstrap techniques discussed in Refs.~\cite{Caron-Huot:2009ypo,Iliesiu:2018fao,Petkou:2018ynm,Karlsson:2019qfi,Karlsson:2022osn,Marchetto:2023xap,Ceplak:2024bja,Buric:2025anb,Barrat:2025nvu,Buric:2025fye} to potentially even further constrain the spectra of large-$N$ thermal correlators. Finally, the discussion of Section~\ref{sec:3d} could also be generalised, for example, to higher-dimensional large-$N$ systems with Maxwell dynamics \cite{Grozdanov:2017kyl,Hofman:2017vwr,Grozdanov:2018fic,DeWolfe:2020uzb,Frangi:2024enh} and to operators of other spins. We defer all such explorations to future works.

\appendix
\section{Real-time formalism and relevant applications}
\label{app:RT_CT}
In this appendix, we present a few relevant details that are required to specify retarded correlation functions, as well as discuss the real-time formulation of the double-trace deformed RG flows.

\subsection{Contact terms}
\label{subapp:contact}
The discussion about the double-trace deformed spectra depends critically on the precise definition of the retarded correlators. Various definitions of correlators can differ by contact terms, which can arise when the correlators are defined variationally, and the action depends non-linearly on the source \cite{Romatschke_2009}. This usually happens where the source is turned on by gauging the derivatives, for example, when the theory has an external U$(1)$ gauge field turned on or is placed on a curved background. We work on a (Schwinger-Keldysh) closed-time-path contour running from $t=-\infty$ to $t=\infty$ (branch $1)$, then back to $t=-\infty$ (branch $2$) and around the thermal circle (along a branch with imaginary time). We can write
\begin{equation}
    e^{\Gamma[J_1,J_2]}=\int \CD \varphi_E \CD \varphi_1 \CD \varphi_2 \, e^{-S_E[\varphi_E]+i S\qty[\varphi_1,J_1]-i S\qty[\varphi_2,J_2]}, \label{app:CTP}
\end{equation}
where the action $S$ is a functional of the dynamical field $\varphi$, as well as the source $J$, so that
\begin{equation}
    \CO(t)=\left.\frac{\delta S[\varphi,J]}{\delta J(t)}\right|_{J=0}.
\end{equation}
Note that we do not demand that the action is linear in the source. $S_E$ is the Euclidean action, and the path integral is path-ordered with respect to the integration contour. The causal (retarded) response to an external source (see Eq.~\eqref{eq:green_response}) is
\begin{equation}
    \expval{\CO(t)}_J=-\frac{i}{2}\left.\qty[\frac{\delta\Gamma[J_1,J_2]}{\delta J_1(t)}-\frac{\delta\Gamma[J_1,J_2]}{\delta J_2(t)}]\right|_{\substack{J_1=J_2=J}}, \label{app:causal_response}
\end{equation}
which, after the sources had been identified, gives
\begin{equation}
    \expval{\CO(t)}_J=\frac{1}{2}\int \CD \varphi_E \CD \varphi_1 \CD \varphi_2 \qty[\frac{\delta S[\varphi_1,J]}{\delta J(t)}+\frac{\delta S[\varphi_2,J]}{\delta J(t)}]e^{-S_E[\varphi_E]+i S\qty[\varphi_1,J]-i S\qty[\varphi_2,J]}.
\end{equation}
Here, we used the fact that $\Gamma[J,J]=0$ due to unitarity. Defining the retarded correlator as
\begin{equation}
    G(t)=\left.\frac{\delta \expval{O(t)}_J}{\delta J(0)}\right|_{J=0}, \label{app:retard}
\end{equation}
we get
\begin{equation}
    G(t,\vb{x})=i \Theta(t) \expval{[\CO(t,\vb{x}),\CO(0,0)]}+\expval{\frac{\delta^2 S[J]}{\delta J(0,0) \delta J(t,\vb{x})}}_{J=0}. \label{app:contact}
\end{equation} 
The first term in Eq.~\eqref{app:contact} is simply the canonical definition of the retarded correlator. The second term arises if the action is non-linear in the source. For local actions, it is a contact term, i.e., a Dirac delta function or its $n$-th derivative. In Fourier space, it therefore corresponds to a polynomial of $\omega$ and $\vb k$, assuming that the action is of finite order in derivatives.

\subsection{Double-trace deformations in the Lorentzian signature}
\label{subapp:trace}
Now we consider double-trace deformations in the real-time formalism. For simplicity, we assume that the action is now linear in the source, i.e., that  $S[\varphi,J]= S_0[\varphi]+\int dt \, \CO(t) J(t)$, where we suppressed the spatial dependence. We consider a double-trace deformed action
\begin{equation}
    S_f[\varphi,J]=S_0[\varphi]+\int dt\qty[\CO(t) J(t)-\frac{f}{2}\CO(t)^2].
\end{equation}
The corresponding generating functional is defined as
\begin{equation}
    e^{\Gamma[J_1,J_2;f]}=\int \CD \varphi_E \CD \varphi_1 \CD \varphi_2 \, e^{-S_E[\varphi_E]+i S_f\qty[\varphi_1,J_1]-i S_f\qty[\varphi_2,J_2]}.
\end{equation}
Note that we did not deform the Euclidean part of the path integral as that turns out to be inconsequential for the computation of two-point functions (see e.g.~Ref.~\cite{landsman87}). Recalling Eq.~\eqref{eq:Hubbard-Stratonovich}, we can write the partition function of the deformed theory as a path integral over the source,
\begin{equation}
    \expval{e^{i\int dt\qty[ \CO(t) J(t)-\frac{f}{2}\CO(t)^2]}}=\int \CD \widetilde{J} \, e^{i\int dt\qty[ \CO(t) \widetilde J(t)+\frac{1}{2f}(J(t)-\widetilde J(t))^2]}.
\end{equation}
Applying this to both sides of Eq.~\eqref{app:CTP}, we get
\begin{equation}
    e^{\Gamma[J_1,J_2;f]}=\int \CD \widetilde J_1 \CD \widetilde J_2 \, e^{\Gamma[\widetilde J_1,\widetilde J_2,0]+\frac{i}{2f}\int dt\qty[(J_i(t)-\widetilde J_i(t))\CF_{ij}(J_j(t)-\widetilde J_j(t))]},\label{app:SK_dbltrace}
\end{equation}
where
\begin{equation}
    \Gamma[J_1,J_2;f]=\frac{i}{2}\int dt_1 dt_2 \, J_i(t_1) \CG_{ij}(t_1-t_2;f)J_j(t_2),
\end{equation}
and
\begin{equation}
    \CF=\mqty(1& 0 \\ 0 & -1).
\end{equation}
We can perform the path integral from Eq.~\eqref{app:SK_dbltrace} in the saddle point approximation, and arrive at the matrix equation
\begin{equation}
    \CG(\omega;f)=\frac{1}{f}\qty[\mathbb{I}-(f \CG(\omega;0) \CF+\mathbb{I})^{-1}]\CF.
\end{equation}
Finally, we can then compute the retarded correlator as defined through Eqs.~\eqref{app:causal_response} and \eqref{app:retard}, and arrive at
\begin{equation}
    G(\omega;f)=\frac{G_0(\omega)}{1+f G_0(\omega)},
\end{equation}
which is the equation for the double-trace deformed retarded correlator that we have been using throughout this work. Note that the same equation trivially holds for the advanced correlator, but not for other Lorentzian two-point functions.

\section{Details of the relevant aspects of complex analysis}
\label{app:complex-analysis}
In this appendix, we elaborate on the details of complex analysis that underlies the statements made in the main part of the paper, and present the relevant proofs. We first focus on the asymptotic properties of $\rho(\omega)$ and $G(\omega)$, their decomposition, and the respective implications for the distribution of zeroes and poles. We then turn to the implications of the thermal product formula, concluding with the derivation of the spectral duality relation. For clarity, we repeat the main assumptions and definitions.

The hermiticity property \eqref{eq:hermiticity0} and the analyticity in the upper complex half-plane \eqref{assumption:retarded} of the retarded correlators are assumed throughout, as well as the absence of any higher-order poles or zeroes in the correlation functions.

\subsection{A theorem on partial fractions decomposition}
\label{subapp:partial}
We first cite a theorem on the partial fractions decomposition of a class of meromorphic functions (see e.g.~Refs.~\cite{smirnov, saksAnalyticFunctions}).
\begin{theorem}
\label{theorem:partial_fraction}
    Let $R_N$ be a sequence of positive real numbers such that
    \begin{equation}
        R_N<R_{N+1}, \quad \lim_{N\rightarrow\infty}R_N =\infty. \label{app:RN}
    \end{equation}
    Let $f(\omega)$ be a meromorphic function, with the set of poles denoted by $\qty{\omega_n}$, and corresponding residues denoted by $\qty{r_n}$. All the poles are assumed to be simple. Furthermore, let $M_N$ denote the maximal modulus of $f(\omega)$ on concentric circles
    \begin{equation}
        M_N = \max_{\abs{\omega}=R_N}\abs{f(\omega)}.
    \end{equation}
    Suppose that $f(\omega)$ is such that, for some sequence $R_N$, we have
    \begin{equation}
        \lim_{N\rightarrow \infty} M_N=0. \label{app:th:decay_assumption}
    \end{equation}
    Then, $f(\omega)$ admits a partial fractions expansion 
    \begin{equation}            f(\omega)=\lim_{N\rightarrow\infty}\sum_{\abs{\omega_n} < R_N} \frac{r_n}{\omega-\omega_n}. \label{app:th:partial}
    \end{equation} 
    For any $R>0$, the convergence in the region $\abs{\omega}\leq R$ is uniform.
\end{theorem}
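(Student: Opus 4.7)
The plan is to follow the classical Mittag-Leffler strategy: choose an auxiliary function whose contour integral produces $f(\omega)$ as one residue and the partial-fractions sum as the remaining residues, and then show that the contour integral vanishes in the limit by virtue of the decay hypothesis $M_N \to 0$.

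First, I would fix $\omega$ not equal to any pole of $f$, take $N$ large enough that $|\omega| < R_N$, and (after a harmless infinitesimal perturbation of $R_N$ if necessary) arrange that no $\omega_n$ lies on $|z| = R_N$. Applying the residue theorem to $g(z) = f(z)/(z-\omega)$ on $|z| = R_N$, the integrand has a simple pole at $z = \omega$ with residue $f(\omega)$ and simple poles at each $z = \omega_n$ with $|\omega_n| < R_N$ of residue $r_n/(\omega_n - \omega)$. This gives the identity
$$\frac{1}{2\pi i}\oint_{|z|=R_N} \frac{f(z)}{z-\omega}\,dz = f(\omega) - \sum_{|\omega_n|<R_N}\frac{r_n}{\omega-\omega_n}.$$

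Next, I would estimate the contour integral with the ML inequality. Using $|f(z)| \leq M_N$ on $|z|=R_N$ together with $|z-\omega| \geq R_N - |\omega|$, the integral is bounded in modulus by $R_N M_N / (R_N - |\omega|)$. For the uniform convergence claim on $|\omega| \leq R$, I would restrict to $N$ large enough that $R_N \geq 2R$; then the bound reduces to $2 M_N$, which is independent of $\omega$ and tends to zero by the decay hypothesis. This already yields uniform convergence of the remainder $f(\omega) - \sum_{|\omega_n|<R_N} r_n/(\omega-\omega_n)$ to zero on any compact subset of the plane avoiding the poles of $f$.

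To extend uniform convergence to the whole closed disc $|\omega| \leq R$ as stated, I would observe that for $N$ large enough to include every pole of $f$ inside $|\omega| \leq R$, the partial sum and $f$ share the same principal parts at each such $\omega_n$; their difference is therefore holomorphic on $|\omega| < R_N$, so the uniform bound above applies without modification near the poles. The main subtlety I anticipate is the need to ensure that the contour $|z| = R_N$ avoids the poles of $f$. Since the poles form a discrete set, a small perturbation of each $R_N$ within a non-degenerate interval resolves this, and the hypothesis $M_N \to 0$ is preserved up to passing to a nearby sequence. No ingredients beyond the residue theorem and the ML estimate are required.
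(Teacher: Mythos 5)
Your proof is correct and follows essentially the same route as the paper: the Cauchy integral $\frac{1}{2\pi i}\oint_{|z|=R_N} f(z)/(z-\omega)\,dz$, evaluated via the residue theorem, combined with the ML estimate and the hypothesis $M_N\to 0$. One small remark: the ``harmless perturbation of $R_N$'' you invoke to avoid poles on the contour is actually unnecessary, since $M_N<\infty$ (a fortiori $M_N\to 0$) already forces $|z|=R_N$ to be pole-free for $N$ large; and had a pole lain on a circle, it is not clear that a nearby radius would preserve $M_N\to 0$, so the perturbation argument as stated would not be a reliable repair. Your extra care in extending uniform convergence across the poles of $f$ inside $|\omega|\le R$ (by noting the difference of $f$ and the partial sum is holomorphic there, so the Cauchy integral representation of the remainder holds throughout the disc) makes explicit a point the paper leaves implicit.
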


\begin{proof}
Consider the positively oriented integral
    \begin{equation}
        \CI_N(\omega) = \frac{1}{2\pi i}\oint\limits_{\abs{z}=R_N}\frac{f(z)}{z-\omega}dz.
    \end{equation}
Let $\omega$ be such that $\abs{\omega}\leq R$ for some $R>0$, and $N$ large enough so that $R_N>R$. The integral can be evaluated using the residue theorem, which gives
\begin{equation}
    \CI_N(\omega)=f(\omega)-\sum_{\abs{\omega_n}<R_N}\frac{r_n}{\omega-\omega_n}. \label{app:proof:IN}
\end{equation}
On the other hand, $\CI_N(\omega)$ can be estimated:
\begin{equation}
    \abs{\CI_N(\omega)}\leq \frac{1}{2\pi} \oint\limits_{\abs{z}=R_N}\abs{\frac{f(z)}{z-\omega}}\abs{dz}\leq \frac{R_N M_N}{R_N-R} .\label{app:proof:estimate}
\end{equation}
Assumption \eqref{app:th:decay_assumption} then guarantees that the right-hand side of Eq.~\eqref{app:proof:estimate} can be made arbitrarily small, i.e., that
\begin{equation}
    \lim_{N\rightarrow\infty}\abs{\CI_N(\omega)}=0.
\end{equation}
The theorem is proven by noticing that the left-hand side of Eq.~\eqref{app:proof:IN} goes to zero in the $N\rightarrow \infty$ limit. Since the estimate on the right-hand side of Eq.~\eqref{app:proof:estimate} is independent of $\omega$, the partial fractions expansion is uniformly convergent for all $\omega$ with $\abs{\omega}\leq R$, where $R$ can be arbitrarily large.
\end{proof}

\begin{corollary}
\label{corollary:partial}
Assumption \ref{app:th:decay_assumption} can be relaxed to allow for a broader class of meromorphic functions to be considered. Specifically, suppose that $f(\omega)$ is such that, for some sequence $R_N$, and some positive integer $p$, we have
\begin{equation}
    \lim_{N\rightarrow\infty} \frac{M_N}{R_N^p}=0.
\end{equation}
Furthermore, suppose that $f(0)\neq \infty$. An auxiliary meromorphic function $f^{\qty[p]}(\omega)$ can then be defined as
\begin{equation}
    f^{\qty[p]}(\omega)=\omega^{-p}\qty[f(\omega)-\sum_{m=0}^{p-1} f_m \omega^{m}],\label{app:def:palt}
\end{equation}
where $f_m$ are the Taylor series coefficients of $f(\omega)$ around $\omega=0$,
\begin{equation}
    f_m=\left.\frac{\partial^m_\omega f(\omega)} {m!}\right|_{\omega=0}. 
\end{equation}
The $\omega^{-p}$ factor in Eq.~\eqref{app:def:palt} ensures that the maximal modulus of $f^{\qty[p]}$ on concentric circles decays to zero as $N\rightarrow\infty$, while the second term in Eq.~\eqref{app:def:palt} ensures that there is no singularity at $\omega=0$. The function $f^{\qty[p]}(\omega)$ then obeys all of the assumptions of Theorem \ref{theorem:partial_fraction}, and admits a partial fractions decomposition in the sense of Eq.~\eqref{app:th:partial}. Finally, the convergent partial fractions expansion of $f(\omega)$ then takes the explicit form
\begin{equation}
    f(\omega)=\sum_{m=0}^{p-1}f_m \omega^m + \omega^p\lim_{N\rightarrow\infty}\sum_{\abs{\omega_n}< R_N} \frac{r_n}{\omega_n^p}\frac{1}{\omega-\omega_n}.
\end{equation}
\end{corollary}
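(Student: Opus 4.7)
The plan is to show that the auxiliary function $f^{[p]}(\omega)$ defined in the statement satisfies all the hypotheses of Theorem \ref{theorem:partial_fraction}, then apply that theorem to get a partial fractions expansion of $f^{[p]}$, and finally unwind the definition of $f^{[p]}$ to recover the claimed expansion for $f$. The two things that must be checked are that $f^{[p]}$ is meromorphic with the correct simple-pole structure (and no singularity at $\omega=0$), and that its maximum modulus on $\abs{z}=R_N$ decays to zero.

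First, I would verify the analytic structure of $f^{[p]}$. Since $f(0)\neq\infty$, none of the poles $\omega_n$ coincide with the origin, so the Taylor subtraction $\sum_{m=0}^{p-1}f_m\omega^m$ is entire and the prefactor $\omega^{-p}$ is holomorphic in a neighbourhood of every $\omega_n$. Consequently, $f^{[p]}$ inherits the simple poles of $f$ precisely at $\{\omega_n\}$, with residues $r_n/\omega_n^p$: the polynomial subtraction contributes nothing to the local Laurent tail, and $\omega^{-p}$ merely supplies the holomorphic factor $\omega_n^{-p}$. At $\omega=0$, the Taylor expansion of $f$ gives $f(\omega)-\sum_{m=0}^{p-1}f_m\omega^m = f_p\omega^p + O(\omega^{p+1})$, so after dividing by $\omega^p$ the function $f^{[p]}$ extends holomorphically with $f^{[p]}(0)=f_p$.

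Next, I would establish the required decay. On $\abs{\omega}=R_N$, the triangle inequality combined with the crude bound $\abs{\sum_{m=0}^{p-1}f_m\omega^m}\leq C R_N^{p-1}$ (valid for $R_N$ large and some constant $C$ depending only on $p$ and the first $p$ Taylor coefficients of $f$) yields
\begin{equation}
    \max_{\abs{\omega}=R_N}\abs{f^{[p]}(\omega)} \leq \frac{M_N + C R_N^{p-1}}{R_N^p} = \frac{M_N}{R_N^p} + \frac{C}{R_N}.
\end{equation}
Both terms vanish as $N\to\infty$: the first by the relaxed hypothesis $M_N/R_N^p\to 0$, and the second trivially.

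Having verified both hypotheses, Theorem \ref{theorem:partial_fraction} applies to $f^{[p]}$ and yields
\begin{equation}
    f^{[p]}(\omega) = \lim_{N\to\infty}\sum_{\abs{\omega_n}<R_N}\frac{r_n}{\omega_n^p}\,\frac{1}{\omega-\omega_n},
\end{equation}
with the convergence uniform on compact subsets of $\mathbb{C}$ away from $\{\omega_n\}$. Multiplying by $\omega^p$ and adding back the subtracted polynomial reproduces the claimed expansion for $f(\omega)$. There is no serious obstacle; the only subtlety worth flagging is the simultaneous bookkeeping that the Taylor subtraction cancels the apparent singularity at $\omega=0$ introduced by $\omega^{-p}$, while leaving the pole structure at $\omega_n\neq 0$ intact and producing the correct residues $r_n/\omega_n^p$.
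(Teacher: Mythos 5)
Your proof is correct and follows exactly the construction the paper uses: build the auxiliary function $f^{[p]}$, check that the polynomial subtraction removes the singularity at $\omega=0$ while $\omega^{-p}$ tames the growth on $\abs{\omega}=R_N$, apply Theorem \ref{theorem:partial_fraction}, and undo the definition. The paper states the corollary with this reasoning embedded rather than giving a separate formal proof, so your write-up is simply a more explicit version of the same argument.
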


\subsection{Assumption on the power-law boundedness of $\rho(\omega)$}
\label{subapp:boundedness}
In real time, the spectral function takes the form $\rho(t)=\expval{[\CO(t),\CO(0)]}/2$. Aside from the possible singular behaviour at $t=0$, it is expected to be a finite, well-behaved function of $t$. Consequently, $\rho(\omega)$ must be such that it admits a well-behaved Fourier transform. A sufficient condition that guarantees such behaviour is captured in the assumption on the power-law boundedness of $\rho(\omega)$, namely, in Eq.~\eqref{assumption:rho_bounded}.\footnote{Note that there exist meromorphic functions which admit a Fourier transform for $t\neq0$, which, in some directions of the complex plane, grow faster than a power law but slower than an exponential. An example of such function is $f(\omega)=\cos \sqrt{\omega} \cos\sqrt{-\omega}/\cosh(\omega-1)\cosh(\omega+1)$, the Fourier transform of which can be computed using residues, even though $\abs{f(\omega\rightarrow i\infty)}\sim e^{\sqrt{\abs{\omega}}}$. We do not consider the possibility of such spectral functions in this paper.} It states that there exists a sequence $R_N$ (in the sense of Eq.~\eqref{app:RN}), so that
\begin{equation}
    \max_{\abs{\omega}=R_N} \abs{\rho(\omega)}< A R_N^{2\Delta-d}, \label{app:rho_powerlaw}
\end{equation}
for some large enough constant $A$. The $2\Delta-d$ exponent is motivated by the large-$\omega$ behaviour of $\rho(\omega)$ (see Eq.~\eqref{assumption:OPE}). This assumption guarantees that $\rho(\omega)$ admits a partial fractions expansion in terms of Theorem \ref{theorem:partial_fraction} and thus a Fourier transform that is well-behaved for $t\neq 0$. More specifically, if $2\Delta-d<0$, Theorem \ref{theorem:partial_fraction} can be used directly, and we have
\begin{subequations}
\label{app:partial-frac-rho}
\begin{equation}
    \rho(\omega)=\lim_{N\rightarrow\infty} \sum_{\abs{\omega^+_n}<R} \frac{r_n}{\omega-\omega_n^+}+\frac{r_n}{\omega+\omega_n^+},
\end{equation}
whereas for $2\Delta-d \geq 0$, Corollary \ref{corollary:partial} can be used to write
\begin{equation}
    \rho(\omega)=\sum_{m=0}^{p-1} \rho_m \omega^m+\omega^p \lim_{N\rightarrow\infty} \sum_{\abs{\omega^+_n}<R} \frac{r_n}{(\omega_n^+)^p}\qty[\frac{1}{\omega-\omega^+_n}+\frac{(-1)^p}{\omega+\omega_n^+}],
\end{equation}
\end{subequations}
where $p>2\Delta-d$. Here, we have used the fact that $\rho(\omega)$ is an odd function, with poles at $\qty{\pm \omega_n^+}$.

\subsection{Assumption on the power-law asymptotics of $G(\omega)$}
\label{subapp:powerlaw}
We now consider to the retarded correlator and the assumption on its power-law asymptotics. This assumption is motivated by the discussion in Section~\ref{subsec:UV-IR-CFT}, where $-1/G(\omega)$ can be interpreted as a retarded correlator in its own right, and is thus expected to have a sensible real-time representation. The expectation is therefore that $G(\omega)$ exhibits power-law behaviour in all directions of the complex plane that admit an asymptotic expansion. This gives $\partial_\omega \ln G(\omega) \sim \omega^{-1}$. Instead of relying strictly on the asymptotic expansion along the rays in the complex plane, we assume that there exists a sequence $R_N$ (which may differ from the one used in Eq.~\eqref{app:rho_powerlaw}), for which\footnote{Note that we do not explicitly assume that $G(\omega)$ itself is bounded on concentric circles in the sense of Eq.~\eqref{app:rho_powerlaw}, which would guarantee a well-behaved Fourier transform. It turns out that such an assumption would be largely redundant in our discussion.}
\begin{equation}
    \max_{\abs{\omega}=R_N}\abs{\partial_\omega \ln G(\omega)}\leq \frac{C}{R_N}. \label{app:G-powerlaw}
\end{equation}
Since $G(\omega)$ is meromorphic with only simple poles $\qty{\omega_n^+}$ and zeroes $\qty{\omega_n^-}$, we must have
\begin{subequations}
\begin{align}
\text{Res}_{\omega_n^+}\partial_\omega \ln G(\omega)&=-1,\\
\text{Res}_{\omega_n^-}\partial_\omega \ln G(\omega)&=+1.
\end{align}
\end{subequations}
We can now use assumption \eqref{app:G-powerlaw} and Theorem \ref{theorem:partial_fraction} to write
\begin{equation}
    \partial_\omega \ln G(\omega)=\lim_{N\rightarrow \infty} \qty[\sum_{\abs{\omega_n^-}<R_N}\frac{1}{\omega-\omega_n^-}-\sum_{\abs{\omega_n^+}<R_N}\frac{1}{\omega-\omega_n^+}]. \label{app:G_dlog}
\end{equation}
The limit above is uniformly convergent on any closed, centred disc. Consequently, Eq.~\eqref{app:G_dlog} can be integrated and exponentiated to give\footnote{In contrast with the more usual Weierstrass decomposition of a meromorphic function, the decomposition presented here has only a single parameter that is not directly determined by poles and zeroes, namely, $G(0)$.}
\begin{equation}
\label{app:G-product}
G(\omega)=G(0)\lim_{N\rightarrow\infty}\frac{\pi_N^-(\omega)}{\pi_N^+(\omega)},
\end{equation}
with
\begin{equation}
    \pi^\pm_N(\omega)=\prod_{\abs{\omega_n^\pm}<R_N}\qty(1-\frac{\omega}{\omega_n^\pm}).
\end{equation}

To discuss the implications of the assumption on the power-law asymptotics of $G(\omega)$, we remind the reader that $n_+(R)$ ($n_-(R)$) counts the numbers of poles (zeroes) of $G(\omega)$ with $\abs{\omega_n^+}<R$ ($\abs{\omega_n^-}<R$). The argument principle tells us that
\begin{equation}
    n_-(R)-n_+(R)=\frac{1}{2\pi i}\oint\limits_{\abs{\omega}=R}\partial_\omega \ln G(\omega) d\omega,
\end{equation}
where the integral is positively oriented. Using the triangle-type inequality, assumption \eqref{app:G-powerlaw} then allows us to estimate
\begin{equation}
    \abs{n_-(R_N)-n_+(R_N)}\leq\frac{1}{2\pi}\oint\limits_{\abs{\omega}=R_N}\abs{\partial_\omega \ln G(\omega)} \abs{d\omega}\leq C.
\end{equation}
This means that there exists a sequence $R_N$ for which the difference between the number of zeroes and poles in the radius-$R_N$ disc is bounded by some constant $C$. Since there is an infinite number of such points, we can write
\begin{equation}
    \liminf_{R\rightarrow\infty}\abs{n_-(R)-n_+(R)}\leq C.
\end{equation}
This is the precise sense in which the radial densities of poles and zeroes must essentially be the same. Besides this fact, the poles and zeroes they must also lie close together. More specifically, the sum
\begin{equation}
    \lim_{N\rightarrow \infty} \abs{\sum_{\abs{\omega_n^-}<R_N}\frac{1}{\omega_n^-}-\sum_{\abs{\omega_n^+}<R_N}\frac{1}{\omega_n^+}}<\infty
\end{equation}
is convergent. The statement follows from the evaluation of Eq.~\eqref{app:G_dlog} at $\omega=0$. 

\subsection{Assumption on the large-$\omega$ behaviour of $\rho(\omega)$}
\label{subapp:OPE}
Generic properties of a CFT, namely, the OPE, tell us that $\rho(\omega)$ admits an asymptotic expansion with the leading term scaling as \cite{Dodelson:2023vrw,Caron-Huot:2009ypo}
\begin{equation}
    \omega\rightarrow\infty^+: \quad \rho(\omega) \sim \omega^{2\Delta-d}. \label{app:rhoOPE}
\end{equation}
Quantitatively, this provides constraints on the Christmas tree spectrum (see Eqs.~\eqref{eq:christmas_constraint} and \eqref{eq:xmas_constraint}), which will be discussed in Appendix~\ref{subapp:xmas-OPE}. Here, we use Eq.~\eqref{app:rhoOPE} to justify the partial fractions expansion of $G(\omega)$ (see Eq.~\eqref{eq:G_partial_decomp}). Given that $\rho(\omega)$ is proportional to the odd part of $G(\omega)$, with the poles present only in the lower half-plane, we can use Eq.~\eqref{app:partial-frac-rho} to write
\begin{subequations}
\label{app:G-partialfrac}
\begin{align}
    2\Delta-d &< p: & G(\omega)&=g(\omega^2)+i \sum_{m=0}^{p-1}\rho_m \omega^m +2i\omega^p \lim_{N\rightarrow\infty}\sum_{\abs{\omega_n^+}<R_N} \frac{r_n}{(\omega^+_n)^p}\frac{1}{\omega-\omega^+_n},\\
    2\Delta-d&<0: & G(\omega)&=g(\omega^2)+2i \lim_{N\rightarrow\infty}\sum_{\abs{\omega_n^+}<R_N}  \frac{r_n}{\omega-\omega_n^+}.
\end{align}    
\end{subequations}
The even function $g(\omega^2)$ is set to be a polynomial by the assumption on the asymptotic power-law behaviour \eqref{app:G-powerlaw}. The convergence of the sums in Eq.~\eqref{app:G-partialfrac} is not guaranteed by the convergence of the sums in the partial fractions decomposition of $\rho(\omega)$. While an extra assumption on the power-law boundedness of $G(\omega)$ would amend that trivially, we instead choose to rely on Eq.~\eqref{app:rhoOPE} to show that the decomposition prescription of Eq.~\eqref{app:G-partialfrac} is indeed convergent.

To proceed, we introduce $\rho^{[p]}$ as
\begin{equation}
    \rho^{[p]} =
    \begin{cases}
        \rho(\omega), & p=0,\\
        \omega^{-p}\qty[\rho(\omega)-\sum\limits_{m=0}^{p-1}\rho_m \omega^m], & p>0,
    \end{cases}
\end{equation}
with $p>2\Delta-d$. We also introduce
\begin{equation}
    M_N = \max_{\abs{\omega}=R_N}\abs{\rho^{[p]}(\omega)}.
\end{equation}
By construction of $\rho^{[p]}$ and assumption \eqref{app:rho_powerlaw}, we have
\begin{equation}
    \lim_{N\rightarrow\infty}M_N=0,
\end{equation}
and, from assumption \eqref{app:rhoOPE},
\begin{equation}
    \lim_{\omega\rightarrow \infty^+}\rho^{[p]}(\omega)=0.
\end{equation}
Since, by assumption, there are no poles on the real axis, $\rho^{[p]}(\omega)$ is bounded along the entire real axis. Let $\CC_N$ be a positively oriented contour in the complex plane that is the boundary of a radius-$R$ half-disc in the lower complex half-plane, i.e.,
\begin{equation}
    \oint_{\CC_N} = \int_{\CC_N^1} + \int_{\CC_N^2}, \label{app:contour}
\end{equation}
where (see Fig.~\ref{fig:half-disc})
\begin{subequations}
 \begin{align}
    \int_{\CC_N^1} &= \int\limits_{\substack{\Im\omega =-\epsilon\\\abs{\omega}\leq R_N}}, \\
    \int_{\CC_N^2} &= \int\limits_{\substack{\abs{\omega}=R_N\\\Im\omega<0}}.
\end{align}   
\end{subequations}
\begin{figure}
	\centering
	\begin{tikzpicture}[scale=1]
		\node[rectangle,minimum width=.5cm,minimum height=.5cm,anchor=north east,draw] (lab) at (7,2) {$\omega$};
        \node[inner sep=0pt,anchor=south] at (6.5,.7+0.2) {$R_N$};
        \draw[very thin,gray] 
			(1,.7) -- (7,.7)
			(4,-2.5) -- (4,2);
        \draw
            (1.5,.5) arc [radius=2.5, start angle=-180, end angle= 0] node[currarrow,pos=0.66,xscale=1,sloped,scale=1.25] {};
        \draw
            (1.5,.5) -- (6.5,.5) node[currarrow,pos=0.25,xscale=-1,sloped,scale=1.25] {};
        \node[inner sep=0pt,anchor=north east] at ($(4,.5)+ (.3,.3)+2.5*cos(225)*(1,0)+2.5*sin(255)*(0,1)$) {$\mathcal C_N^2$};
        \node[inner sep=0pt,anchor=north] at (5.25,.35) {$\mathcal C_N^1$};
        \end{tikzpicture}
        \caption{The positively-oriented contour $\CC_N$ used in Eq.~\eqref{app:contour}. \label{fig:half-disc}}
\end{figure}
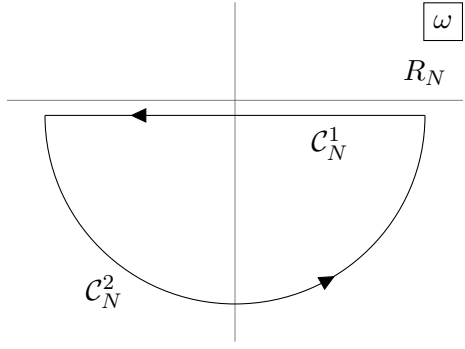
We consider the integral
\begin{equation}
    \CI_N(\omega)=\frac{1}{2\pi i}\lim_{\epsilon\rightarrow0}\oint_{\CC_N}\frac{\rho^{\qty[p]}(z)}{z-\omega}dz,
\end{equation}
where $\omega$ is real, with $-R_N\leq \omega \leq R_N$. On the one hand, this integral can be evaluated by using the residue theorem. The only poles of the integrand that the $\CC_N$ winds around are the poles in the lower half-plane, which gives
\begin{equation}
    \CI_N(\omega) = -\sum_{\abs{\omega_n^+}<R_N}\frac{r_n}{(\omega^+_n)^p}\frac{1}{\omega-\omega_n^+}. \label{app:in_convergence}
\end{equation}
This is precisely the sum that appears in the partial fractions decomposition of the retarded correlator \eqref{app:G-partialfrac}. We will now show that the limit $N\rightarrow\infty$ in Eq.~\eqref{app:in_convergence} converges to a finite value for all real $\omega$. We introduce
\begin{equation}
    \CI_N=\CI_N^1+\CI_N^2,
\end{equation}
where
\begin{subequations}
\begin{align}
\CI^1_N &= \frac{1}{2\pi i}\lim_{\epsilon\rightarrow0}\int_{\CC^1_N}\frac{\rho^{\qty[p]}(z)}{z-\omega}dz,  \\
\CI^2_N &= \frac{1}{2\pi i}\lim_{\epsilon\rightarrow0}\int_{\CC^2_N}\frac{\rho^{\qty[p]}(z)}{z-\omega}dz.
\end{align}
\end{subequations}
The first part of the contour amounts to evaluating
\begin{equation}
    \CI_N^1(\omega) =- \frac{1}{2\pi i} \lim_{\epsilon\rightarrow 0}\int_{-R_N}^{R_N} \frac{\rho^{\qty[p]}(z)}{z-\omega-i\epsilon}dz.
\end{equation}
Using the Sokhotski-Plemelj theorem, we can evaluate
\begin{equation}
    \CI_N^1(\omega)=-\frac{1}{2}\rho^{[p]}(\omega)-\frac{1}{2\pi i}\CP\int_{-R_N}^{R_N}\frac{\rho^{\qty[p]}(z)}{z-\omega}dz,
\end{equation}
where $\CP$ stands for Cauchy principal value. The first term is finite for all the $\omega$ considered, and is independent of $N$. The second term can be also shown to be finite for all $\omega$ by splitting it into three parts and using the fact that $\rho^{[p]}(\omega)$ is a smooth, bounded function that vanishes at $\omega\rightarrow \infty$. Turning now to $\CI_N^2$, we proceed as in the proof of Theorem \ref{theorem:partial_fraction}. Due to $\rho(\omega)$ being odd, we have the identity
\begin{equation}
    M_N=\max_{\omega\in \CC_N^2}\abs{\rho^{[p]}(\omega)}.
\end{equation}
We can now estimate
\begin{equation}
    \abs{\CI_N^2(\omega)}\leq \frac{1}{2\pi}\int_{\CC_N^2}\abs{\frac{\rho^{\qty[p]}(z)}{z-\omega}}\abs{dz}\leq \frac{1}{2}\frac{R_N M_N}{R_N-\abs{\omega}}.
\end{equation}
We therefore have
\begin{equation}
    \lim_{N\rightarrow\infty}\CI_N^2(\omega)=0.
\end{equation}
We conclude that the $N\rightarrow\infty$ limit of $\CI_N(\omega)$ is finite for all real $\omega$. Therefore, the sum \eqref{app:G-partialfrac} converges, and an explicit partial fractions expansion of $G(\omega)$ exists.\footnote{Note that, for real $\omega$, this gives the Kramers-Kronig relation
\begin{equation}
    \Re G(\omega)= g(\omega^2)+\frac{\omega^p}{\pi}\CP \int_{-\infty}^\infty \frac{\rho^{[p]}(z)}{z-\omega}dz.
\end{equation}
In this form, the integral is appropriately regularised to accommodate the possibly divergent spectral function. This is in contrast with the usual form of the relation, where $p=0$ and the integral may be divergent. In our case, the scheme dependence of the divergent integral is absorbed into the free coefficients of $g(\omega^2)$. This identity is not limited to meromorphic correlators.
} Evaluating $\CI_N(0)$ gives the convergence of the sum \eqref{eq:res_convergence}.
\subsection{A theorem on even entire functions of order one}
\label{subapp:entire}
In this appendix, we cite some standard definitions and results from the theory of entire functions, and present a theorem which is applicable directly to the thermal product hypothesis. For details and proofs, we refer the reader to Refs.~\cite{levin80,levin96,Holland73}.

A function $f(\omega)$ is entire if it is holomorphic on the entire complex plane. As such, it is characterised by a set of isolated zeroes $\qty{\omega_n}$, as well as an absence of poles or branch points. An entire function $f(\omega)$ is of finite order if $\ln\abs{f(\omega)}$  is asymptotically bounded by $\abs{\omega}^{\rho+\epsilon}$ for some $\rho$ and any $\epsilon > 0$. More specifically, the order of $f(\omega)$, denoted by $\rho_f$, is defined as
\begin{equation}
\rho_f=\limsup_{R\rightarrow\infty}\frac{\ln\ln M_f(R)}{\ln R},
\end{equation}
where
\begin{equation}
M_f(R)=\max_{\abs{\omega}=R}\abs{f(\omega)}.
\end{equation}
The asymptotic behaviour of an entire function is intimately related to the distribution of its zeroes, which we assume to be simple zeroes. To quantify this, we consider the sum
\begin{equation}
    \sum_n \frac{1}{\abs{\omega_n}^\mu} < \infty,
\end{equation}
for some large enough, non-negative $\mu$. The \emph{convergence exponent} is the lower bound of all such $\mu$, i.e., $\lambda=\inf\mu$, while the \emph{genus} is the smallest non-negative integer $p$ such that the sum converges for $\mu=p+1$. The counting function $n(R)$, which counts the number of zeroes with $\abs{\omega_n}<R$, is related to the convergence exponent via
\begin{equation}
    \limsup_{R\rightarrow\infty}\frac{\ln n(R)}{\ln R}=\lambda, \label{app:th:limsup}
\end{equation}
and a theorem by Hadamard tells us that
\begin{equation}
    \lambda \leq \rho_f. \label{app:th:Hadamard}
\end{equation}
The set $\qty{\omega_n}$ then uniquely defines an absolutely convergent \emph{canonical} Hadamard product
\begin{equation}
    g(\omega)=\prod_n e^{P_p(\omega)} \qty(1-\frac{\omega}{\omega_n}), \label{app:th:canonical}
\end{equation}
where
\begin{equation}
    P_p(\omega)=\sum_{j=1}^p \frac{\omega}{j\omega_n}.
\end{equation}
We note that a product $\prod_n (1+a_n)$ is absolutely convergent if $\sum_n \abs{a_n}<\infty$. A theorem by Borel then states that the order of the canonical Hadamard product $\rho_g$ coincides with the corresponding exponent of convergence, i.e.,
\begin{equation}
    \rho_g=\lambda. \label{app:th:Borel}
\end{equation}
Finally, the Hadamard factorisation theorem states that any entire function of finite order may be decomposed in the following way,
\begin{equation}
    f(\omega)=g(\omega) e^{q(\omega)}, \label{app:th:HadamardDecomp}
\end{equation}
where $q(\omega)$ is a polynomial of degree no greater than $\rho_f$. We can now prove a theorem on even entire functions of order one.
\begin{theorem}
\label{app:entiretheorem}
Let $f(\omega)$ be an even entire function of order $\rho_f=1$, with only simple zeroes, denoted by $\qty{\omega_n,-\omega_n}$, $\omega_n\neq 0$, and let $n(R)$ count all $\omega_n$ with $\abs{\omega_n}<R$. The function $f(\omega)$ can then be expressed as
    \begin{equation}
        f(\omega)=f(0)\prod_n \qty(1-\frac{\omega^2}{\omega_n^2}),
    \end{equation}
    with the counting function obeying
    \begin{equation}
        \limsup_{R\rightarrow\infty}\frac{\ln n(R)}{\ln R}=1. \label{app:th:limsupT}
    \end{equation}
\end{theorem}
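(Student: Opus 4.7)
My plan is to obtain the product formula by applying Hadamard's factorisation theorem and using evenness to kill the exponential pre-factor, then extract the limsup identity from Borel's theorem combined with Eq.~\eqref{app:th:limsup}.

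First, because $f$ is entire of order $\rho_f=1$, the inequality \eqref{app:th:Hadamard} forces the convergence exponent $\lambda$ of the zero set to satisfy $\lambda\le 1$, so the genus of the zeroes is either $0$ or $1$. In either case, once the zeroes are grouped as pairs $\omega_n$ and $-\omega_n$ (which have equal modulus, so the regrouping respects the modular ordering of the canonical Hadamard product), the Weierstrass primary factors at $\pm\omega_n$ combine via $(1-\omega/\omega_n)(1+\omega/\omega_n)=1-\omega^2/\omega_n^2$, while any exponential corrections $e^{\omega/\omega_n}$ and $e^{-\omega/\omega_n}$ cancel identically. The canonical product \eqref{app:th:canonical} therefore collapses to the absolutely convergent even product
\[
g(\omega)=\prod_n\left(1-\frac{\omega^2}{\omega_n^2}\right), \qquad g(0)=1,
\]
with absolute convergence guaranteed by $\sum 1/|\omega_n|^2<\infty$, which follows from $\lambda\le 1$.

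Next, Hadamard factorisation \eqref{app:th:HadamardDecomp} gives $f(\omega)=e^{q(\omega)}g(\omega)$ with $q(\omega)=a+b\omega$ a polynomial of degree at most $\rho_f=1$. The central step is that evenness of both $f$ and $g$ forces $e^{q(\omega)}=e^{q(-\omega)}$ off the zero set of $g$, and hence everywhere by continuity, so $q(\omega)-q(-\omega)=2b\omega$ is a continuous $2\pi i\mathbb{Z}$-valued function of $\omega$ that vanishes at $\omega=0$, hence is identically zero. Therefore $b=0$, $e^{q}$ is the constant $e^{a}$, and evaluating at $\omega=0$ pins $e^{a}=f(0)$, establishing the product representation. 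For the counting statement, Borel's theorem \eqref{app:th:Borel} applied to $g$ gives $\rho_g=\lambda$; since $f=f(0)\,g$ is a nonzero scalar multiple of $g$, $M_f(R)=|f(0)|M_g(R)$ and consequently $\rho_f=\rho_g=\lambda$. The hypothesis $\rho_f=1$ then pins $\lambda=1$, and Eq.~\eqref{app:th:limsup} delivers $\limsup_{R\to\infty}\ln n(R)/\ln R=1$.

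The only mildly delicate step I anticipate is the evenness argument killing $b$, which hinges on forcing the continuous $2\pi i\mathbb{Z}$-valued function $q(\omega)-q(-\omega)$ to vanish identically; the remainder is a direct assembly of the Hadamard--Borel machinery cited in Appendix~\ref{subapp:entire}. One sanity check worth recording in passing is that $f$ must have infinitely many zeroes for the limsup to equal $1$, which is automatic here: if the zero set were finite then $g$ would be a polynomial and $f=f(0)\,g$ would have order $0$, contradicting $\rho_f=1$.
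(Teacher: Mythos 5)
Your proposal follows the paper's proof essentially verbatim: Hadamard's inequality bounds the genus to $0$ or $1$, pairing the primary factors at $\pm\omega_n$ cancels the exponential corrections and yields the even canonical product, Hadamard factorisation leaves a degree-$\le 1$ exponential prefactor that evenness forces to be constant, and Borel's theorem together with Eq.~\eqref{app:th:limsup} pins $\lambda=1$. The only differences are cosmetic refinements — your $2\pi i\mathbb{Z}$-valued continuity argument for killing the linear coefficient is a more careful spelling-out of the paper's terser ``evenness forces $a=0$'', and the closing sanity check about infinitely many zeroes is a nice addition not in the original — but the structure and key lemmas are the same.
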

\begin{proof}
Due to Hadamard's theorem (cf.~Eq.~\eqref{app:th:Hadamard}) we have $p\leq\lambda\leq 1$, which sets the genus to either $p=0$ or $p=1$. The canonical Hadamard products take the form (see Eq.~\eqref{app:th:canonical})
\begin{align}
    p&=0: \quad g(\omega)=\prod_n \qty(1-\frac{\omega}{\omega_n})\prod_n \qty(1+\frac{\omega}{\omega_n})=\prod_n \qty(1-\frac{\omega^2}{\omega_n^2}),\\
    p&=1: \quad g(\omega)=\prod_n e^{\omega/\omega_n}\qty(1-\frac{\omega}{\omega_n})\prod_n e^{-\omega/\omega_n} \qty(1+\frac{\omega}{\omega_n})=\prod_n \qty(1-\frac{\omega^2}{\omega_n^2}).
\end{align}
Since the function is even, the canonical products themselves are even, and take the same form irrespective of the genus. By the Hadamard factorisation theorem \eqref{app:th:HadamardDecomp}, we have
\begin{equation}
    f(\omega)=f(0)g(\omega) e^{a\omega}.
\end{equation}
Since both $f(\omega)$ and $g(\omega)$ are even, we must set $a=0$. The function $f(\omega)$ is then just a rescaling of the canonical product $g(\omega)$, which sets the convergence exponent to $\lambda=1$ due to Borel's theorem (cf.~Eq.~\eqref{app:th:Borel}). Following Eq.~\eqref{app:th:limsup}, we obtain Eq.~\eqref{app:th:limsupT}. Note that the fact that $n(R)$ counts the zeroes without their respective mirrored pairs does not affect Eq.~\eqref{app:th:limsupT}.

\end{proof}

\subsection{The thermal product formula}
\label{subapp:TPF}
In this appendix, we recap the main result of Ref.~\cite{Dodelson:2023vrw} and endow it with a constraint on the asymptotic distribution of poles. The thermal product hypothesis states that
\begin{equation}
    \frac{\sinh\frac{\beta\omega}{2}}{\rho(\omega)} \text{ is an entire function of order one}. \label{app:hypothesis}
\end{equation}
Consequently, we can apply Theorem \ref{app:entiretheorem} to write
\begin{equation}
    \frac{\sinh\frac{\beta\omega}{2}}{\rho(\omega)} = \frac{1}{\mu} \prod_n \qty[1-\qty(\frac{\omega}{\omega_n^+})^2], \label{app:tpf}
\end{equation}
with the counting function $n_+(R)$ obeying
\begin{equation}
    \limsup_{R\rightarrow\infty}\frac{\ln n_+(R)}{\ln R}=1. \label{app:n_limsup}
\end{equation}
Eq.~\eqref{app:tpf} allows us to compute the residues of $\rho(\omega)$,
\begin{equation}
    r^+_n=-\mu\frac{\omega^+_n \sinh\frac{\beta\omega^+_n}{2}}{2\prod\limits_{\substack{m\\ m\neq n} }\qty[1-\qty(\frac{\omega^+_n}{\omega_m^+})^2]},
\end{equation}
and, hence, the residues of $G(\omega)$,
\begin{equation}
    \text{Res}_{\omega_n^+}G(\omega)=2ir_n^+.
\end{equation}

\subsection{The spectral duality relation}
\label{subapp:SDR}
To derive the spectral duality relation, we have to assume the power-law asymptotics of $G(\omega)$ and the thermal product formula \eqref{app:tpf}. Using the product decomposition of $G(\omega)$ and the definition of $\rho(\omega)$, we get
\begin{align}
    \rho(\omega)&=\frac{G(0)}{2i}\lim_{N\rightarrow\infty}\qty(\frac{\pi^-_N(\omega)}{\pi^+_N(\omega)}-\frac{\pi^-_N(-\omega)}{\pi^+_N(-\omega)}) \nonumber\\
    &=\frac{G(0)}{2i}\lim_{N\rightarrow\infty}\frac{\pi_N^+(-\omega)\pi_N^-(\omega)-\pi_N^+(\omega)\pi_N^-(-\omega)}{\pi^+_N(\omega) \pi^+_N(-\omega)}.
\end{align}
Using the definition
\begin{equation}
    S(\omega)=\lim_{N\rightarrow\infty}\pi_N^+(\omega)\pi_N^-(-\omega),
\end{equation}
and the convergence of
\begin{equation}
    \lim_{N\rightarrow\infty}\pi_N^+(\omega)\pi_N^+(-\omega)=\prod_n \qty[1-\qty(\frac{\omega}{\omega_n^+})^2],
\end{equation}
along with the thermal product formula \eqref{app:tpf}, we get the spectral duality relation \cite{Grozdanov:2024wgo},
\begin{equation}
    S(\omega)-S(-\omega)=2i\lambda \sinh\frac{\beta\omega}{2},
\end{equation}
with 
\begin{equation}
    \lambda=-\frac{\mu}{G(0)}.
\end{equation}

We can now also consider the residues of $-1/G(\omega)$, corresponding to the respective poles $\qty{\omega_n^-}$, expressed as
\begin{equation}
    \text{Res}_{\omega_{n}^-}\qty(-\frac{1}{G(\omega)})=2i r_n^-=-\frac{1}{G'(\omega_n^-)}.
\end{equation}
By using the product decomposition (see Eq.~\eqref{app:G-product}) and the spectral duality relation, we can write
\begin{equation}
    \frac{1}{2i}\qty(-\frac{1}{G(\omega)}+\frac{1}{G(-\omega)})=\frac{\mu}{G(0)^2}\lim_{N\rightarrow\infty}\frac{\sinh\frac{\beta\omega}{2}}{\pi_N^-(\omega)\pi_N^-(-\omega)}. \label{app:tpf-zeroes}
\end{equation}
Essentially, this means that the thermal product formula holds for $-1/G(\omega)$ as well. Note that this does not directly imply the order-one property of the zeroes as the product in Eq.~\eqref{app:tpf-zeroes} does not necessarily correspond to an absolutely convergent canonical Hadamard product. Nevertheless, the order-one property \eqref{app:th:limsupT} likely holds for the zeroes as well, and is guaranteed when we can interpret the zeroes as a spectrum of a deformed theory for which the thermal product hypothesis \eqref{app:hypothesis} applies.

\section{Generalised Christmas tree configuration}
\label{app:xmas}
In this appendix, we provide the technical details that underlie the results of Section~\ref{subsec:christmas}. 

\subsection{A preliminary theorem}
When considering correlation functions that exhibit a Christmas-tree-like spectrum, their asymptotic behaviour is determined solely by a small number of parameters that determine the angle and the spacing of the asymptotic lines. Importantly, the small-$\omega$ details of the spectrum are largely irrelevant for the large-$\omega$ behaviour of the correlators. This is summarised in the following theorem, which states that we can, at the level of a logarithmic derivative, substitute the locations of poles and zeroes with their asymptotic values, with errors that fall off faster than $1/\omega$.
\begin{theorem}
\label{app:xmas-theorem}
    Let $w_m=d (m+\xi)$ for some complex $d=\abs{d}e^{i\theta}$ and $\xi$, and let $\delta \omega_n = \CO(m^{-\epsilon})$, for some $\epsilon>0$. For the sum
\begin{equation}
    \Delta(\omega)=\sum_{m=1}^\infty\frac{1}{\omega-w_m}-\frac{1}{\omega-w_m-\delta\omega_m} ,\label{app:delta_sum}
\end{equation}
the following holds:
\begin{equation}
\varphi \neq \theta: \quad \lim_{\abs{\omega}\rightarrow\infty}\abs{\omega}\Delta(\abs{\omega}e^{i\varphi})=0.
\end{equation}
\end{theorem}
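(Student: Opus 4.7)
The plan is to combine the two fractions, use the hypothesis $\delta\omega_m = O(m^{-\epsilon})$ to produce a pointwise bound on each summand, and then control the resulting sum via a geometric estimate of $|\omega - w_m|$ together with a split of the sum into three regimes.

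First, a direct combination gives
\begin{equation*}
\frac{1}{\omega - w_m} - \frac{1}{\omega - w_m - \delta\omega_m} = \frac{-\delta\omega_m}{(\omega - w_m)(\omega - w_m - \delta\omega_m)} .
\end{equation*}
Since $|\delta\omega_m| \leq C m^{-\epsilon}$ is bounded and, as will be seen, $|\omega - w_m|$ grows at least linearly in $|\omega|$, for $|\omega|$ large enough one has $|\omega - w_m - \delta\omega_m| \geq \tfrac{1}{2} |\omega - w_m|$ uniformly in $m$. Hence it suffices to show that
\begin{equation*}
|\omega| \sum_{m\geq 1} \frac{m^{-\epsilon}}{|\omega - w_m|^2} \longrightarrow 0 \quad \text{as } |\omega|\to\infty .
\end{equation*}

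Next, I would write $\omega = R e^{i\varphi}$, $w_m = d\xi + |d| m e^{i\theta}$, and $\alpha = \varphi - \theta$, and use a direct computation to obtain, modulo bounded corrections stemming from the shift $d\xi$,
\begin{equation*}
|\omega - w_m|^2 = R^2 \sin^2\alpha + (|d|m - R\cos\alpha)^2 + O(R) .
\end{equation*}
This already implies $|\omega - w_m| \geq c R|\sin\alpha|$ (which is trivial when $\sin\alpha = 0$, as in that case $(|d|m - R\cos\alpha)^2 \geq R^2$ takes over) as well as $|\omega - w_m| \gtrsim |d|m$ whenever $m \gg R/|d|$. With these bounds in hand, split the sum at $m_\star \approx R\cos\alpha/|d|$, taken to be zero when $\cos\alpha \leq 0$, into a \emph{central band} $|m-m_\star|\leq m_\star/2$, a \emph{tail} $m\geq 2 m_\star$, and, when $m_\star \geq 1$, a \emph{near-origin} piece $m\leq m_\star/2$. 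In the central band $m\sim R$, so $m^{-\epsilon}=O(R^{-\epsilon})$, and comparing the sum $\sum_k (R^2\sin^2\alpha + |d|^2 k^2)^{-1}$ to the integral $\pi/(R|d\sin\alpha|)$ gives a contribution of order $R^{-\epsilon-1}$. The tail is controlled by $\sum_{m\geq R} m^{-\epsilon-2}=O(R^{-\epsilon-1})$, and the near-origin piece by $R^{-2}\sum_{m\leq R} m^{-\epsilon}=O(R^{-\epsilon-1})$, up to a harmless $\log R$ if $\epsilon = 1$. Adding these and multiplying by $R$ yields $O(R^{-\epsilon})\to 0$, which is the claim.

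The main obstacle is the central band, where of order $R$ summands each carry a denominator as small as $R^2\sin^2\alpha$, so that the naive per-term bound barely fails. The rescue comes from the fact that the numerator $m^{-\epsilon}$ is essentially constant and of size $R^{-\epsilon}$ throughout this window, combined with the Riemann-sum-versus-integral comparison that replaces the discrete sum by a multiple of $1/R$. The hypothesis $\epsilon > 0$ is exactly the margin needed for this combination to defeat the outer factor of $R$; any weaker asymptotic control on $\delta\omega_m$ would not suffice.
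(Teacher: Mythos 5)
Your overall strategy is sound and genuinely different from the paper's. The paper chooses a \emph{fixed} cutoff $m_*$ (depending on $\varphi$ but not on $R$), proves the pointwise bound $\abs{\omega}\abs{s_m(\omega)} \le C\abs{\delta\omega_m/w_m} = \CO(m^{-1-\epsilon})$ uniformly in $\omega$ for $m>m_*$, and then invokes dominated convergence; there is no division of the sum into $R$-dependent windows. You instead estimate the sum directly by splitting at $m_*\sim R$ into central band, near-origin, and tail, and compare to the integral $\pi/(R\abs{d\sin\alpha})$. This buys you an explicit quantitative decay rate ($\abs{\omega}\Delta = \CO(R^{-\epsilon})$ up to logarithms), which the paper's DCT argument does not immediately provide, at the cost of more bookkeeping.

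There is, however, a genuine gap in your bookkeeping for the regime $\cos\alpha \le 0$. In that case you set $m_\star = 0$, so the central band and near-origin pieces vanish, and the "tail" becomes the entire sum $m\ge 1$. Your tail estimate as stated---``$\sum_{m\ge R} m^{-\epsilon-2} = \CO(R^{-\epsilon-1})$''---relies on the tail starting at $m\gtrsim R$, which is only true when $m_\star\gtrsim R$, i.e.\ when $\cos\alpha>0$. If instead you use the uniform lower bound $\abs{\omega - w_m}\gtrsim \abs{d}m$ (valid for $\cos\alpha\le0$ and all $m\ge1$), you get only $\sum_{m\ge1} m^{-\epsilon-2}=\CO(1)$, which fails to beat the outer factor of $R$. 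The fix is short: for $\cos\alpha\le0$ one has $\abs{\omega-w_m}^2 \gtrsim R^2 + \abs{d}^2 m^2$, so split at $m\sim R/\abs{d}$ and use $\abs{\omega-w_m}\gtrsim R$ below that and $\abs{\omega-w_m}\gtrsim\abs{d}m$ above it---exactly your near-origin and tail estimates, just attached to a different cutoff than the one your definition of $m_\star$ supplies. A related cosmetic imprecision: the error term in $\abs{\omega-w_m}^2$ stemming from the shift $d\xi$ is $\CO(R+m)$, not $\CO(R)$; this does not harm the estimates (for $m\gg R$ it is swamped by $m^2$), but it should be stated correctly since the bound is claimed uniformly in $m$.
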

\begin{proof}
Let $\omega=\abs{\omega}e^{i\varphi}$, $w_m=\abs{w_m}e^{i\chi_m}$, and let
\begin{equation}
    s_m(\omega)=\frac{1}{\omega-w_m}-\frac{1}{\omega-w_m-\delta \omega_m}.
\end{equation}
We want to show that
\begin{equation}
\varphi \neq \theta: \quad \lim_{\abs{\omega}\rightarrow\infty}\sum_{m=1}^\infty \abs{\omega} s_m(\omega)=0.
\end{equation}
It will be convenient to split the sum into two parts,
\begin{equation}
    \sum_{m=1}^\infty \abs{\omega}s_m(\omega)=\sum_{m=1}^{m_*-1} \abs{\omega}s_m(\omega)+\sum_{m=m_*}^\infty \abs{\omega}s_m(\omega), \label{app:sum-mstar}
 \end{equation}
for some large enough $m_*$, so that
\begin{equation}
    \frac{\abs{\delta\omega_{m}}}{\abs{w_{m}}\abs{\sin(\varphi-\chi_{m})}}<1 \text{ and } \abs{\varphi-\theta}>\abs{\chi_m-\theta}\text{, for all }m>m_*. \label{app:mstar}
\end{equation}
The finite sum in Eq.~\eqref{app:sum-mstar} trivially limits to zero in the $\abs{\omega}\rightarrow\infty$ limit. Next, we write
\begin{equation}
    \abs{\omega} \abs{s_m(\omega)}=\abs{\frac{\omega\delta\omega_m}{(\omega-w_m)^2}}{\abs{1-\frac{\delta\omega_m}{\omega-w_n}}}^{-1}.
\end{equation}
The first factor in this product can be estimated as
\begin{equation}
    \abs{\frac{\omega\delta\omega_m}{(\omega-w_m)^2}}\leq \frac{\abs{\delta\omega_m}}{2\abs{w_m}\abs{1-\cos(\varphi-\chi_m)}}.
\end{equation}
For the second factor, we can write
\begin{equation}
    {\abs{\frac{\delta\omega_m}{\omega-w_n}}}\leq \frac{\abs{\delta \omega_m}}{\abs{w_m}\abs{\sin(\varphi-\chi_m)}}.
\end{equation}
The conditions in Eq.~\eqref{app:mstar} tell us that we can find a sufficiently large constant $C$, such that
\begin{equation}
    \abs{\omega} \abs{s_m(\omega)}< C\abs{\frac{{\delta\omega_m}}{{w_m}}}=\CO(m^{-1-\epsilon}).
\end{equation}
The sum in Eq.~\eqref{app:delta_sum} can therefore be bounded by a convergent sum of $\omega$-independent summands,
\begin{equation}
     \abs{\Delta(\omega)}<C\sum_{m=1}^\infty \abs{\frac{{\delta \omega_m}}{{w_m}}}<\infty,
\end{equation}
which implies that we can use the dominated convergence theorem to exchange the order of the limit and the sum:
\begin{equation}
   \varphi \neq \theta: \quad \lim_{\abs{\omega}\rightarrow\infty}\sum_{m=1}^\infty \abs{\omega} s_m(\omega)=\sum_{m=1}^\infty  \lim_{\abs{\omega}\rightarrow\infty}\abs{\omega}s_m(\omega) =0.
\end{equation}
\end{proof}

\subsection{OPE constraints}
\label{subapp:xmas-OPE}
Here, we reiterate the results presented in Ref.~\cite{Dodelson:2023vrw}, and generalise them to an arbitrary number of Christmas tree branches. The thermal product formula \eqref{app:tpf}, in conjunction with the large-$\omega$ expansion of $\rho(\omega)$ \eqref{app:rhoOPE}, tells us
\begin{equation}
\label{app:product-OPE}
    \omega \rightarrow \infty^+: \quad \prod_n \qty[1-\qty(\frac{\omega}{\omega_n^+})^2] \sim e^{\beta\omega/2}\omega^{2\Delta-d}.
\end{equation}
Taking the logarithmic derivative, we get
\begin{equation}
    \omega \rightarrow \infty^+: \quad \sum_n \frac{1}{\omega-\omega_n^+}+\frac{1}{\omega+\omega_n^+} \sim \frac{\beta}{2}+\frac{2\Delta-d}{\omega}+\ldots.
\end{equation}
Assuming the Christmas tree form of the spectrum
\begin{equation}
    \omega^+_{m,j}=d^+_j (m+\xi^+_j)+\CO(m^{-\epsilon}),
\end{equation}
we can write
\begin{equation}
    \omega \rightarrow \infty^+: \quad \sum_j \sum_{m=1}^\infty \frac{1}{\omega-d_j^+(m+\xi_j^+)}+\frac{1}{\omega+d_j^+(m+\xi_j^+)}\sim \frac{\beta}{2}+\frac{2\Delta-d}{\omega}+\ldots. \label{app:dsum}
\end{equation}
Here, we have used Theorem~\ref{app:xmas-theorem} to conclude that the $\CO(m^{-\epsilon})$ correction does not affect the first two terms in the asymptotic expansion. The sum in Eq.~\eqref{app:dsum} can be evaluated explicitly in terms of the digamma function $\psi$ as
\begin{equation}
    \sum_{m=1}^\infty \frac{1}{\omega-d_j^+(m+\xi_j^+)}+\frac{1}{\omega+d_j^+(m+\xi_j^+)} = \frac{\psi\qty(1+\xi_j^+-\frac{\omega}{d_j^+})-\psi\qty(1+\xi_j^++\frac{\omega}{d_j^+})}{d_j^+},
\end{equation}
and expanded asymptotically using Stirling's approximation
\begin{equation}
    \psi(\omega) \sim \ln\omega-\frac{1}{2\omega}+\CO(\omega^{-2}), \label{app:stirling}
\end{equation}
where the branch cut of the logarithm runs along the negative real axis. By using the fact that $\Im d_j^+<0$, we obtain
\begin{equation}
    -\sum_j \frac{i\pi}{d_j^+}+\frac{1+2\xi_j^+}{\omega}= \frac{\beta}{2}+\frac{2\Delta-d}{\omega}.
\end{equation}
which recovers Eqs.~\eqref{eq:christmas_constraint} and \eqref{eq:xmas_constraint}.

We remark on the generalisation of the above configuration, where the lines of poles scale as
\begin{equation}
    \omega^+_{m,j}\sim m^{\alpha_j}.
\end{equation}
The condition \eqref{app:n_limsup} makes it clear that we must have $\alpha_j\geq 1$ with at least one $\alpha_j=1$. We can then write
\begin{equation}
    \prod_n \qty[1-\qty(\frac{\omega}{\omega_n^+})^2]=\prod_j f_j(\omega),
\end{equation}
where
\begin{equation}
    f_j(\omega)=\prod_{m=1}^\infty \qty[1-\qty(\frac{\omega}{\omega_{m,j}^+})^2].
\end{equation}
By Borel's theorem (see Eq.~\eqref{app:th:Borel}), the function $f_j(\omega)$ is of order $1/\alpha_j$. As such, it is asymptotically bounded as $\ln{\abs{f_j(\omega)}}<\abs{\omega}^{1/\alpha_j+\epsilon}$, for any $\epsilon>0$. For $\alpha_j>1$, the function $f_j(\omega)$ cannot contribute to the leading exponential factor in the asymptotic expansion in Eq.~\eqref{app:product-OPE}. We therefore find that 
\begin{equation}
    -i\sum_{\substack{j\\\alpha_j=1}}\frac{1}{d_j^+}=\frac{\beta}{2\pi}.
\end{equation}
No simple generalisation exists for the subleading term involving the scaling dimension. This is due to the fact that the inclusion of the $\alpha_j>1$ branches can be shown to require an asymptotic expansion which is not of the form of $\omega_{m,j}^+\sim m^{\alpha_j}+\CO(m^{\alpha_j-1})$ (see Ref.~\cite{Dodelson:2023vrw}).

\subsection{Zeroes, poles, and asymptotics}
\label{subapp:xmas-retarded}
We now consider the power-law asymptotics of $G(\omega)$ (see Appendix \ref{subapp:powerlaw}) in the context of the Christmas tree configurations. Specifically, we consider the lines of poles and zeroes that have the form
\begin{subequations}
\label{app:xmas-branches}
\begin{align}
    \omega_{m,j}^+&=d^+_j (m+\xi_j^+)+\CO(m^{-\epsilon}),\\
    \omega_{m,j}^-&=d^-_j (m+\xi_j^+)+\CO(m^{-\epsilon}).
\end{align}    
\end{subequations}
For simplicity, we consider non-overlapping lines, i.e., with
\begin{equation}
\label{app:non-overlapping}
    \Im \frac{d_j^+}{d_{j'}^+}\neq 0 \text{ and } \Im \frac{d_j^-}{d_{j'}^-}\neq 0 \text{, for } j\neq j'.
\end{equation}
The adaptation of the Hadamard's decomposition theorem for meromorphic functions (see e.g.~Ref.~\cite{Goldberg2008-ya}) says that we can write
\begin{equation}
G(\omega)=e^{h(\omega)}\frac{\prod\limits_{j}\prod\limits_{m=1}^\infty e^{\omega/\omega_{m,j}^-}\qty(1-\frac{\omega}{\omega_{m,j}^-})}{\prod\limits_{j}\prod\limits_{m=1}^\infty e^{\omega/\omega_{m,j}^+}\qty(1-\frac{\omega}{\omega_{m,j}^+})},\label{app:G-Hadamard}
\end{equation}
for some entire $h(\omega)$. This decomposition differs from the product expansion \eqref{app:G-product} by the presence of the exponential factors, and the fact that both the numerator and the denominator converge individually (and absolutely). Taking the logarithmic derivative of Eq.~\eqref{app:G-Hadamard}, we get
\begin{equation}
    \frac{G'(\omega)}{G(\omega)}=h'(\omega) + \sum_j \sum_{m=1}^\infty \qty[\frac{1}{\omega_{m,j}^-}+ \frac{1}{\omega-\omega_{m,j}^-}]-\sum_j \sum_{m=1}^\infty \qty[\frac{1}{\omega_{m,j}^+} +\frac{1}{\omega-\omega_{m,j}^+}]. \label{app:logG-sum}
\end{equation}
The assumption about the power-law asymptotics tells us that $G'(\omega)/G(\omega)$ must decay as $1/\abs{\omega}$ in all of the relevant directions of the complex plane. We can use Theorem~\ref{app:xmas-theorem} to substitute $\omega_n^\pm \rightarrow d^\pm_j(m+\xi_j^\pm)$ and evaluate the sums in terms of the digamma function. Using Stirling's approximation \eqref{app:stirling}, the yields, to leading order,
\begin{equation}
    \frac{G'(\omega)}{G(\omega)}\sim h'(\omega)+ \sum_j\frac{1}{d_j^-}\ln\qty(-\frac{\omega}{d_j^-})-\sum_j\frac{1}{d_j^+}\ln\qty(-\frac{\omega}{d_j^+})+\ldots. \label{app:dlog-G}
\end{equation}
First, we conclude that $h'(\omega)=\text{const}.$, since any other terms would lead to diverging asymptotics. Next, we introduce\begin{equation}
    \omega=\abs{\omega}e^{i\varphi}, \quad d_j^\pm=\abs{d_j^\pm}e^{i\theta_j^\pm},
\end{equation}
and rewrite Eq.~\eqref{app:dlog-G} as
\begin{equation}
    \frac{G'(\omega)}{G(\omega)}\sim v(\varphi)\ln\abs{\omega} +\ldots,
\end{equation}
and demand that $v(\varphi)= 0$ for all $\varphi \neq \theta_j^\pm$. We emphasise again that the branch cut of the logarithm must be taken along the negative real axis. Consequently, $v(\varphi)$ has discontinuities at
\begin{equation}
    \varphi = \theta_j^\pm,
\end{equation}
with their respective sizes given by
\begin{equation}
    \lim_{\epsilon\rightarrow 0}\qty[v\qty(\varphi+\epsilon)-v\qty(\varphi-\epsilon)]=2\pi i\qty[\sum_{\substack{j\\\theta_j^+=\varphi}}\frac{1}{d_j^+}-\sum_{\substack{j\\\theta_j^-=\varphi}}\frac{1}{d_j^-}].
\end{equation}
Taking Eq.~\eqref{app:non-overlapping} into account, the only way to cancel the discontinuities is by having the same number of branches of poles and zeroes, and setting
\begin{equation}
    d_j \equiv d_j^+=d_j^-. \label{app:parallel-condition}
\end{equation}
This automatically gives $v(\varphi)=0$.

We can now rewrite the sum \eqref{app:logG-sum} as
\begin{equation}
    \frac{G'(\omega)}{G(\omega)}=\qty[h'(\omega)+\sum_j \sum_{m=1}^\infty \frac{1}{\omega_{m,j}^-}-\frac{1}{\omega_{m,j}^+}]+\qty[\sum_j \sum_{m=1}^\infty \frac{1}{\omega-\omega_{m,j}^-}-\frac{1}{\omega-\omega_{m,j}^+}],
\end{equation}
where the condition \eqref{app:parallel-condition} ensures the convergence of the rearranged sum. We can again use Theorem \ref{app:xmas-theorem} to approximate $\omega_{m,j}^\pm$ with their asymptotic values. The sum then evaluates to a sum of digamma functions, which can again be expanded asymptotically to give
\begin{equation}
    \qty[\sum_j \sum_{m=1}^\infty \frac{1}{\omega-\omega_{m,j}^-}-\frac{1}{\omega-\omega_{m,j}^+}] \sim \sum_j \frac{\xi_j^+-\xi_j^-}{\omega}+\ldots.
\end{equation}
This sets
\begin{equation}
\label{app:f-sol}
    h(\omega)=\ln G(0)+\omega\qty[\sum_j \sum_{m=1}^\infty \frac{1}{\omega_{m,j}^+}-\frac{1}{\omega_{m,j}^-}],
\end{equation}
resulting in
\begin{equation}
    \varphi \neq \theta_j: \quad \frac{G'(\omega)}{G(\omega)} \sim \frac{\sigma}{\omega}+\ldots, \label{app:log-power}
\end{equation}
where
\begin{equation}
    \sigma = \sum_j \xi_j^--\xi_j^+.
\end{equation}
Since there is no non-trivial phase dependence in Eq.~\eqref{app:log-power}, we immediately obtain
\begin{equation}
    \varphi \neq \theta_j: \quad G(\omega) \sim \omega^{-\sigma}.
\end{equation}
Note that it is now clear that the Hadamard decomposition \eqref{app:G-Hadamard} is equivalent to the product decomposition \eqref{app:G-product}, since we can use Eq.~\eqref{app:f-sol} to write
\begin{equation}
    G(\omega)=G(0)\prod_j \prod_{m=1}^\infty\frac{1-\frac{\omega}{\omega_{m,j}^-}}{1-\frac{\omega}{\omega_{m,j}^+}}.
\end{equation}

\acknowledgments
We thank Richard Davison, Giorgio Frangi, Elizabeth Helfenberger, Robin Karlsson, Mark Mezei, Jaka Pelai\v c, and Alexander Zhiboedov for many elucidating and fruitful discussions. The work of S.G. was supported by the STFC Ernest Rutherford Fellowship ST/T00388X/1. The work is also supported by the research programme P1-0402 and the project J7-60121 of Slovenian Research Agency (ARIS). M.V. is funded by the STFC Studentship ST/X508366/1.


\bibliographystyle{JHEP}
\bibliography{biblio.bib}

\end{document}